\newcommand{\ignore}[1]{}
\newtheorem{theorem}{Theorem}[section]
\newtheorem{corollary}[theorem]{Corollary}
\newtheorem{lemma}[theorem]{Lemma}
\theoremstyle{definition}
\newtheorem{example}[theorem]{Example}
\theoremstyle{remark}
\newtheorem{remark}[theorem]{Remark}
\def\Z{\mathbb{Z}}
\def\G{{\cal G}}
\def\ideal{{\rm ideal}}
\DeclareMathOperator{\tw}{tw}
\DeclareMathOperator{\td}{td}
\newcommand{\init}{\mathit{in}}
\begin{document}

\title{\bf On the Complexity of Toric Ideals}

\author{Diego Cifuentes and Shmuel Onn}

\date{\today}

\maketitle

\begin{abstract}
We investigate the computational complexity of problems on toric ideals
such as normal forms, Gr\"obner bases, and Graver bases.
We show that all these problems are strongly NP-hard in the general case.
Nonetheless, we can derive efficient algorithms by taking advantage of the sparsity pattern of the matrix.
We describe this sparsity pattern with a graph, 
and study the parameterized complexity of toric ideals in terms of graph parameters such as treewidth and treedepth.
In particular, we show that the normal form problem can be solved in parameter-tractable time in terms of the treedepth.
An important application of this result is in multiway ideals arising in algebraic statistics.
We also give a parameter-tractable membership test to the reduced Gr\"obner basis.
This test leads to an efficient procedure for computing the reduced Gr\"obner basis.
Similar results hold for Graver bases computation.

\vskip.2cm
\noindent {\bf Keywords:} 
toric ideal, graphical structure, Gr\"obner basis, computational algebra, integer programming, algebraic statistics
\end{abstract}

\section{Introduction}

Let $A$ be an $m\times n$ integer matrix,
and consider the polynomial ring $\Z[x]=\Z[x_1,\dots,x_n]$.
For a vector $v\in \Z^n$ denote the positive and negative parts by $v^+,v^-\in\Z_+^n$, which satisfy $v=v^+-v^-$.
The {\em toric ideal} of $A$ is
$$\ideal(A)\ :=\
\langle x^{v^+}-x^{v^-} : v\!\in\!\Z^n,\ Av\!=\!0 \rangle
\ =\
\langle x^{u}-x^{w} : u,w\!\in\!\Z_+^n,\ Au\!=\!Aw \rangle
\ \subseteq\ \Z[x],$$
This paper concerns the complexity of computational problems on $\ideal(A)$.
The central problem we approach in this paper is the computation of normal forms.
We also investigate the computation of Gr\"obner bases and Graver bases.

A main assumption throughout this document is that the matrix $A$ is sparse, i.e., most of its entries are zero.
Accounting for the sparsity structure of $A$ can lead to significant computational improvements.
This has been thoroughly studied in areas such as numerical linear algebra, numerical optimization, graphical models, and constraint processing \cite{Dechter2003,Vandenberghe2015,Bodlaender2008}.
A common idea in all these areas is to represent the sparsity pattern with a graph, and take advantage of graph theoretical concepts, such as \emph{chordality} and \emph{treewidth}, to develop faster algorithms.
More recently, these kinds of techniques have been used in integer programming~\cite{KLO} and computational algebraic geometry~\cite{CPgrobner,CPnetworks}.

The problems that we study for toric ideals (normal forms, Gr\"obner bases, Graver bases) are computationally hard in the general case.
Nonetheless, we develop novel algorithms that take advantage of the inherent graphical structure in~$A$.
In particular, we investigate how the complexity of these problems depends on graph parameters such as treewidth and treedepth.

The structure of this paper is as follows.
\Cref{s:2} gives a brief introduction to graph theoretical concepts such as chordality, treewidth, and treedepth.
We also describe two graphs associated to a sparse matrix $A$:
the column graph~$G(A)$, which has the columns of~$A$ as vertices,
and the row graph~$G(A^T)$, which has the rows as vertices.

\Cref{s:3} introduces the three problems we study in this paper, namely, normal forms, Gr\"obner bases, and Graver bases.
We also show that these problems are strongly NP-hard in the general case,
i.e., in the absence of graphical structure.

\Cref{s:4} is concerned with the computation of normal forms.
We show how to effectively reduce this problem to integer programming.
Together with very recent results from~\cite{KLO},
we conclude that normal forms can be computed in polynomial time
for matrices $A$ with entries of bounded magnitude and with bounded treedepth of~$G(A)$ (or~$G(A^T)$).
More precisely, we provide a parameter-tractable strongly-polynomial time algorithm.
An important application of this result is in multiway table ideals that arise in algebraic statistics.

\Cref{s:5} investigates the computation of Gr\"obner bases and Graver bases.
The main result of this section is a parameter-tractable algorithm for deciding membership to the reduced Gr\"obner basis, or to the Graver basis, in terms of the treewidth of $G(A)$.
This membership test can be efficiently used to construct the reduced Gr\"obner basis, or the Graver basis.

\section{Background on graphs}\label{s:2}

In this section we give a brief introduction to the notions of treewidth and treedepth from graph theory.
We also review the related concepts of chordality, perfect elimination ordering, and elimination tree.
For more details see e.g.,~\cite{Vandenberghe2015,Bodlaender2008}.
Finally, we introduce two graph abstractions associated to a sparse matrix.

Let $G$ be a graph on $n$~vertices.
We fix the ordering of its vertices $(v_1,v_2,\dots,v_{n})$.
This is a \emph{perfect elimination ordering} if for each $1\leq l\leq n$ the set
\begin{align*}
X_l:=\{v_l\}\cup \{v_m: v_m\mbox{ is adjacent to }v_l,\; m>l\}
\end{align*}
defines a clique of~$G$.
A graph is \emph{chordal} if it has a  perfect elimination ordering.
The \emph{clique number} of $G$ is the size of its largest clique~$X_l$.

A chordal graph as above has an associated \emph{elimination tree}.
This tree has the same vertices as $G$, and is rooted at~$v_n$.
For any other vertex $v_l$, its parent is the element in $X_l\setminus \{v_l\}$ with the lowest index.
The \emph{height} of the tree is the largest number of vertices on a path to the root.

Let $G$ be an arbitrary graph.
We say that $\overline{G}$ is a \emph{chordal completion} of $G$, if it is chordal and $G$ is a subgraph of $\overline{G}$.
The \emph{treewidth} of $G$, denoted $\tw(G)$, is the minimum clique number of $\overline{G}$ (minus one) among all possible chordal completions.
The \emph{treedepth} of $G$, denoted $\td(G)$, is the minimum height of the elimination tree of $\overline{G}$ among all possible chordal completions.

Observe that given any ordering $(v_1,\dots,v_{n})$ of the vertices of $G$, there is a natural chordal completion $\overline{G}$, i.e., we add edges to $G$ in such a way that each $X_l$ is a clique.
However, there are $n!$ possible orderings of the vertices and thus finding the best chordal completion is not simple.
Indeed, computing the treewidth (or treedepth) is NP-hard~\cite{Arnborg1987}.
However, for any fixed $k$, there is a linear time algorithm that determines whether the treewidth (or treedepth) of $G$ is $\leq\! k$, and if so gives the corresponding vertex ordering~\cite{Bodlaender1996,RRSS}.
We will use this fact for our parameter-tractable algorithms.
Furthermore, there are also good heuristics and approximation algorithms~\cite{Bodlaender2008}.

The treewidth and treedepth are measures of complexity of the graph~$G$.
The treewidth measures how far is $G$ from a tree, whereas the treedepth measures how far it is from a star.
The inequalities $\tw(G)\leq \td(G){-}1 \leq \log_2(|V|) \tw(G)$ always hold~\cite{Bodlaender1995}.

\begin{example}
  If $G$ is the complete graph then $\tw(G){+}1=\td(G)=n$.
  On the other hand, if $G$ is a tree then $\tw(G){=}1$,
  whereas the treedepth can range from $\td(G){=}2$, when $G$ is a star, up to $\td(G)\!=\!\lceil \log_2 (n{+}1)\rceil$, when $G$ is a path.
\end{example}

Several classical NP-hard problems (e.g., colorability, clique number, independence number) can be solved efficiently when restricted to graphs of bounded treewidth.
It is said that these problems are {\em parameter-tractable} \cite{DF} when parameterized by the treewidth.
These type of results typically rely on dynamic programming performed on the nodes of the elimination tree (also known as join tree, or tree decomposition).

\smallskip
Consider now a sparse matrix $A$ of size $m\times n$.
There are two natural graphs associated to it.
The \emph{column graph} of $A$ has vertex set $[n]:=\{1,2,\ldots,n\}$,
with an edge $\{j_1,j_2\}$ if $A_{i,\,j_1}A_{i,\,j_2}\neq 0$ for some~$i\in[m]$.
And the \emph{row graph} of $A$ has vertex set~$[m]$,
with an edge $\{i_i,i_2\}$ if $A_{i_1,\,j}A_{i_2,\,j}\neq 0$ for some~$j\in[n]$.
We denote the column graph as $G(A)$, and the row graph as $G(A^T)$.
There are other possible graph representations of a sparse matrix
(see e.g.,~\cite{CPpermanents}),
but we will focus on these two for this paper. 

\begin{example}[Toric graph ideal] \label{ex:graph}
  Let $G=(V,E)$ be a graph, and let $A_G$ be its incidence matrix, with rows indexed by $V$, and columns indexed by~$E$.
  The \emph{toric graph ideal} of~$G$ is $\ideal(A_G)$.
  The row graph of $A_G$ is simply~$G$,
  whereas the column graph of $A_G$ is the line graph~$L(G)$.
  The treewidth of the line graph satisfies $\tw(L(G))\leq \tw(G)\Delta(G)$, where $\Delta(G)$ is the largest degree of~$G$ \cite{Harvey2018}.
\end{example}

\begin{example}[$n$-fold product] \label{ex:nfold}
  An {\em $(s_1,s_2)\times t$ bimatrix} is a matrix
$A =
\left(\begin{smallmatrix}
  A_1\\A_2
\end{smallmatrix}\right)$
consisting of two blocks:
$A_1$ of size $s_1\times t$,
and $A_2$ of size $s_2\times t$.
The {\em $n$-fold product} of $A$ is the following $(s_1+ns_2)\times nt$ matrix,
$$A^{(n)}\quad:=\quad
\left(
\begin{matrix}
  A_1    & A_1    & \cdots & A_1    \\
  A_2    & 0      & \cdots & 0      \\
  0      & A_2    & \cdots & 0      \\
  \vdots & \vdots & \ddots & \vdots \\
  0      & 0      & \cdots & A_2    \\
\end{matrix}
\right)\quad .
$$
Such matrices have numerous applications, see \cite{Onn}.
Assume that $A_1,A_2$ are dense matrices.
Then the column graph of $A^{(n)}$ is the complete graph on~$[nt]$.

The row graph is more interesting.
Let $[s_1+ns_2]=I_0\uplus I_1\uplus\dots\uplus I_n$ be the natural partition of the indices of the rows of $A^{(n)}$,
with $I_0=[s_1]$ and $|I_r|=s_2$ for $1\!\leq\! r\!\leq\! n$.
Then $G((A^{(n)})^T)$ has cliques on each of the groups $I_0,I_1,\dots,I_n$, and the groups $I_0,I_r$ are pairwise connected for each $1\!\leq\! r \!\leq\! n$.
The vertex order $(s_1{+}ns_2,\dots,3,2,1)$ is a perfect elimination ordering.
The elimination tree consists of a path on $I_r$ for each $1\!\leq\! r\!\leq\! n$, which are all connected to a path on $I_0$.
It follows that $\tw(G){+}1\leq \td(G)\leq s_1{+}s_2$.
These are in fact equalities, as the graph contains a clique of size $s_1{+}s_2$.
\end{example}

The main subject of this paper is to investigate the complexity of different computational tasks on $\ideal(A)$,
in terms of the graphical structure on~$A$.
In particular, we will explore the dependence on the treewidth/depth of the graphs $G(A)$ and $G(A^T)$.

\section{Problems on toric ideals}\label{s:3}

In this section we briefly describe the following computational problems on toric ideals:
normal forms, Gr\"obner bases, and Graver bases.
We refer to~\cite{Onn,Stu} for more details.
We will show that all these problems are strongly NP-hard in the worst-case (without any sparsity assumptions on~$A$).

Let $I\subseteq \Z[x]$ be an ideal and let $\prec$ be a \emph{monomial order} on $\Z[x]$;
see e.g.,~\cite{Stu}.
The {\em standard monomials} of $I$ under $\prec$ are those monomials $x^v$ which are not
the initial monomial $\init_{\prec}(f)$ of any polynomial $f$ in $I$.
Every polynomial $f\in\Z[x]$ is equivalent modulo $I$ to a unique polynomial $f_\prec$,
called the {\em normal form} of $f$ under $I$,
which is a linear combination of standard monomials.
The first problem on toric ideals we consider is the following:
\begin{description}
  \item[Normal Form.]
    Compute the normal form $f_\prec$ under $\ideal(A)$.
\end{description}
Note that the normal form of any  polynomial $f=\sum_u a_ux^u$ satisfies
$f_\prec=\sum_u a_u (x^u)_\prec$,
and so computing the normal form of $f$ can be done
by computing the normal form of each of its monomials and taking the corresponding linear combination.
Therefore, from here on we focus on computing normal forms of monomials.

Any $\omega\in\Z_+^n$ defines a monomial order ${\prec_\omega}$ on $\Z[x]$ where $x^u{\prec_\omega}x^v$
if either $\omega{\cdot} u<\omega{\cdot} v$,
or $\omega{\cdot}u \!=\! \omega{\cdot} v$ and the first nonzero entry of $u{-}v$ is negative.
The special case $\omega\!=\!0$ is known as the \emph{lexicographic} order,
and the case $\omega\!=\!{\bf 1}$ is known as \emph{graded lexicographic}.
For any ideal $I$ and any monomial order $\prec$ there is a $\omega\in\Z_+^n$
such that $f_\prec\!=\!f_{\prec_\omega}$ for all $f\in\Z[x]$.
Hence we will focus on monomial orders of the form ${\prec_\omega}$ for some $\omega\in\Z^n_+$.

\Cref{thm:hardness} below shows that computing $\prec_\omega$-normal forms of monomials is strongly NP-hard, even if $\omega=0$ or $\omega= {\bf 1}$.

\begin{lemma} \label{thm:complexity}
Consider the integer program
$\min_z \{ c\!\cdot\! z : z \!\in\! \Z^n,\, A z \!=\! b,\, 0\!\leq\! z\!\leq\! t\}$,
and let $\bar z$ be a known feasible point.
The program can be reduced in polynomial time to the computation of a $\prec_0$-normal form.
The same holds for $\prec_{\bf 1}$ (or any graded order).
\end{lemma}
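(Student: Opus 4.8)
The plan is to encode the integer program into a single normal form computation by exploiting the defining property of $\ideal(A)$: two monomials $x^u$ and $x^w$ with $u,w\in\Z_+^n$ are congruent modulo $\ideal(A)$ precisely when $Au=Aw$. Thus, fixing the right-hand side $b$ and the box $0\le z\le t$ amounts to working within a single congruence class of monomials, and minimizing $c\cdot z$ over that class is exactly what the monomial order $\prec_\omega$ does once we arrange $\omega$ and $c$ to agree. The feasible point $\bar z$ is needed as an explicit representative: it gives us a monomial $x^{\bar z}$ to feed in, and its normal form will be the optimizer.

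First I would reduce to the case where all entries of $c$ are nonnegative integers: replacing $c$ by $c+Ny\cdot(\text{something})$ — more precisely, since on the feasible set $Az=b$ is fixed, adding any row-combination $\lambda^T A$ to $c$ changes the objective by the constant $\lambda^T b$, and adding a large multiple of $\mathbf{1}$ shifts by a constant on each box slice only after further care; the clean move is: translate $c$ to $c' := c + M\mathbf{1}$ for $M$ large enough that $c'\ge 0$, noting $c'\cdot z = c\cdot z + M(\mathbf 1\cdot z)$, and $\mathbf 1 \cdot z$ need not be constant. So instead I would homogenize: introduce slack variables $s$ with $z+s=t$ (equivalently add rows to $A$), so that $\mathbf 1\cdot z + \mathbf 1\cdot s = \mathbf 1\cdot t$ is constant; then minimizing $c\cdot z$ over the enlarged system with $c$ acting only on the $z$-block, after shifting $c$ to be nonnegative, does not change the argmin because the added $M(\mathbf 1\cdot(z,s))$ term is a constant. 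This also handles the graded case: with the slack variables in place, every feasible $(z,s)$ has the same total degree $\mathbf 1\cdot t$, so $\prec_{\mathbf 1}$ restricted to this congruence class compares monomials purely by the tie-break rule, and we can further perturb to make the objective $c$ the dominant weight. Concretely, for a graded order we take the weight vector to be $\mathbf 1 + \varepsilon c$ realized integrally by scaling.

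Then I would argue: set $\omega := c'$ (the nonnegativized cost on the $z$-block, zero on slacks) and compute the $\prec_\omega$-normal form of the monomial $x^{\bar z} x^{t-\bar z}_{\text{slack}}$. By the characterization of $\ideal(A)$, this normal form is $\pm x^{z^*}x^{s^*}$ where $(z^*,s^*)$ ranges over all nonnegative integer points with the same image under the enlarged matrix — i.e. exactly the feasible points of the original program with their slacks — and is the $\prec_\omega$-minimal one, which is the one minimizing $c'\cdot z = c\cdot z + \text{const}$. Reading off the $z$-exponent gives the optimal $z$. Every step — building the enlarged matrix, shifting $c$, writing down the input monomial from $\bar z$ — is polynomial in the input size and in the bit-size of $t$ and $c$.

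The main obstacle I anticipate is the nonnegativity/graded-degree bookkeeping: in a congruence class of $\ideal(A)$ the total degree $\mathbf 1\cdot z$ is generally \emph{not} constant (only $Az$ is fixed), so naively a graded order would minimize $\mathbf 1\cdot z$ rather than $c\cdot z$, and a lexicographic order would not see $c$ at all. The slack-variable homogenization is what repairs this — it forces constant degree and lets the perturbed weight carry the cost function — but I would need to check carefully that (i) adding the equations $z+s=t$ only enlarges $A$ without destroying the reduction's polynomiality, (ii) the perturbation making $c$ the dominant term can be realized with integer weights of polynomially-bounded size, and (iii) the normal form is genuinely a single monomial (not a longer polynomial), which follows because within one congruence class $\prec_\omega$ has a unique minimum and $x^u - x^w \in \ideal(A)$ for congruent $u,w$.
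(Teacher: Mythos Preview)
Your construction reduces the integer program to a $\prec_\omega$-normal form where the weight vector $\omega$ is built from the cost $c$ (you set $\omega:=c'$ on the $z$-block, or $\omega:=\mathbf{1}+\varepsilon c$ in the graded discussion). But the lemma asks for something stronger: a reduction to a $\prec_0$-normal form, i.e.\ pure lexicographic with $\omega=0$, and separately to $\prec_{\mathbf 1}$. With $\omega=0$ the order compares coordinates one at a time and never sees $c$; with $\omega=\mathbf 1$ and your slack homogenization the degree is constant, so again the tie-break is pure lex and $c$ is invisible. You note this yourself (``a lexicographic order would not see $c$ at all''), but then your fix is to change $\omega$, which changes the target problem rather than proving the stated lemma.

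The missing idea is to encode the objective value as an extra \emph{coordinate} rather than as a weight. The paper adds a scalar variable $r$ in front, together with the slacks $y=t-z$, and imposes the linear equation $r=c^-\!\cdot y+c^+\!\cdot z$. On the feasible set this gives $r=c^-\!\cdot t+c\cdot z$, so $r$ differs from the objective by a constant, and crucially $r\ge 0$ because $c^\pm,y,z\ge 0$ (this is why one splits $c=c^+-c^-$ rather than writing $r=c\cdot z$ directly). Now $\prec_0$ on $(r,y,z)$ minimizes $r$ first, hence minimizes $c\cdot z$; and for $\prec_{\mathbf 1}$ the total degree is $r+\mathbf 1\cdot(y+z)=r+\mathbf 1\cdot t$, again $r$ plus a constant. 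Your slack-variable step is right and is part of the paper's construction; what is missing is this objective-tracking variable~$r$.
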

\begin{proof}
Let $r\in\Z$ and $y\in \Z^n$ be new integer variables.
Consider the following system of linear equations in~$w:=(r,y,z)\in \Z^{2n+1}$:
\begin{align}\label{eq:A'}
  \!\!\!\!
  \{r \!=\! c^- \!\cdot\! y \!+\! c^+\! \cdot \!z,\, A z \!=\! b,\, y\!+\!z\!=\!t\}
  \quad\iff\quad
  A'w\!=\!b',
  \;\;
  A'\!:=\!\left(\!\begin{smallmatrix}
  -1 & c^-\! & c^+\! \\ 
  0 & 0 & A \\ 
  0 & I & I
  \end{smallmatrix}\!\right),\;\;
  b'\!:=\!\left(\!\begin{smallmatrix} 0\\ b\\ t \end{smallmatrix}\!\right).
\end{align}
Let $u := (c^-\!\cdot\! \bar y \!+\! c^+\!\cdot\! \bar z,\, \bar y,\, \bar z) \in \Z_+^{2n+1}$, where $\bar y:= t\!-\!\bar z$, which satisfies these linear equations.
We claim that the optimal solution of the integer program (IP) can be read off from
the normal form of~$x^u$ by $\ideal(A')$ with respect to~$\prec_\omega$, for either $\omega=0$ or $\omega={\bf 1}$.
This normal form is the $\prec_\omega$-minimum element in the set
\begin{align*}
\{x^w: x^u - x^w \in \ideal(A')\}
&= \{x^{w}: A'w=A'u=b', w\in \Z_+^{n+1}\} \\
&= \{x^{(r,y,z)}: r \!=\! c^- \!\cdot\! y \!+\! c^+ \!\cdot\! z,\; y\!=\!t\!-\!z,\; z\text{ feasible to the IP}\}.
\end{align*}
The $\prec_0$ ordering favors solutions with the smallest value of~$r$.
Since $r = c^-\!\cdot y + c^+\!\cdot z = c^-\!\cdot t + c\cdot z $,
this is the same as minimizing $c\cdot z$.
Let $x^v$ be the $\prec_0$-normal form of $x^u$ and let $(r,y,z):=v$.
We conclude that $z$ is an optimal solution of the integer program.
As for $\prec_1$, it favors solutions with the smallest value of $r+ \sum_i y_i+\sum_i z_i = r + \sum_i t_i$, so it is the same as before.
\end{proof}

\begin{theorem}\label{thm:hardness}
  Computing $\prec_0$-normal forms (or $\prec_{\bf 1}$) is strongly NP-hard.
And, even if the treedepth of $G(A^T)$ is two, the problem is weakly NP-hard.
\end{theorem}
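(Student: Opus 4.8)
The plan is to establish the two assertions by separate reductions. For the first one (strong NP-hardness) I would combine \Cref{thm:complexity} with a strongly NP-hard integer program that carries an obvious feasible point and has entries of bounded magnitude. A convenient choice is \textsc{Vertex Cover}: for a graph $H=(V,E)$, consider the $0/1$ program $\min\{\sum_{v\in V} z_v : z_u+z_w-s_e=1 \text{ for each } e=\{u,w\}\in E,\ 0\le z\le\mathbf 1,\ 0\le s\le\mathbf 1\}$, whose optimum is the minimum size of a vertex cover and for which $z=\mathbf 1$, $s=\mathbf 1$ is feasible. Applying \Cref{thm:complexity} turns this, in polynomial time, into a $\prec_0$-normal form instance (and likewise a $\prec_{\mathbf 1}$ one) from which the optimal value is read off. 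The only point to observe is that the matrix $A'$ constructed in the proof of \Cref{thm:complexity} has entries in $\{-1,0,1\}$ and the exponent vector $u$ has entries at most $|V|$, so the whole reduction stays within polynomially bounded numbers; hence the $\prec_0$- (and $\prec_{\mathbf 1}$-) normal form problem is NP-hard already on poly-bounded data, i.e.\ strongly NP-hard.

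For the second assertion I would give a direct reduction from \textsc{Subset Sum} --- given $a_1,\dots,a_n,b\in\Z_{>0}$, decide whether $\sum_i a_i z_i=b$ for some $z\in\{0,1\}^n$ --- shaped so that the row graph becomes a star. Take variables $r,z_1,\dots,z_n,s^+,s^-,y_1,\dots,y_n$ in this order, let $A'$ be the coefficient matrix of
\begin{align*}
-r+s^++s^-=0,\qquad \textstyle\sum_i a_i z_i+s^+-s^-=b,\qquad z_i+y_i=1\ \ (1\le i\le n),
\end{align*}
put $b':=(0,b,1,\dots,1)$, and let $x^u$ be the monomial of the feasible point $r=b$, $z=\mathbf 0$, $s^+=b$, $s^-=0$, $y=\mathbf 1$. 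Importantly, I would \emph{not} impose upper bounds on $r,s^+,s^-$, so no bound rows are added for them. The rows of $A'$ are then the ``$r$-row'' (nonzeros in $r,s^+,s^-$), the ``main row'' (nonzeros in $z_1,\dots,z_n,s^+,s^-$), and the $n$ bound rows (row $i$ with nonzeros in $z_i,y_i$). The main row shares a column with every other row, while no two other rows share a column, so $G((A')^T)$ is the star $K_{1,n+1}$ and $\td(G((A')^T))=2$.

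Next I would decode the normal form. The monomials congruent to $x^u$ modulo $\ideal(A')$ are exactly the $x^w$ with $A'w=A'u=b'$, i.e.\ $z\in\{0,1\}^n$, $y_i=1-z_i$, $s^+-s^-=b-\sum_i a_i z_i$ and $r=s^++s^-$; the $\prec_0$-normal form is the lexicographically least such monomial, and since $r$ is the first variable its exponent equals $\min_{z\in\{0,1\}^n}\bigl|b-\sum_i a_i z_i\bigr|$ (choosing $s^\pm$ minimal). This equals $0$ iff the \textsc{Subset Sum} instance is a yes-instance, and it is read off in polynomial time; since $a_i$ and $b$ may be exponentially large this is a weak-hardness reduction. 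The argument is unchanged for $\prec_{\mathbf 1}$ (or any graded order), because every monomial in the fiber has total degree $r+\sum_i z_i+s^++s^-+\sum_i y_i=2r+n$, so minimizing the total degree again minimizes $r$ first.

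The delicate point --- and the one I would be most careful about --- is the treedepth constraint. The obvious route, applying \Cref{thm:complexity} to a bounded-variable subset-sum program, introduces a bound row for every variable; together with the main row and the $r$-row these create triangles, which force treedepth at least $3$. The reduction above avoids this by noticing that the auxiliary variables $r,s^+,s^-$ need no explicit upper bounds (an optimal solution automatically keeps them below $b+\sum_i a_i$), so their bound rows can be omitted, leaving a genuine star of treedepth $2$.
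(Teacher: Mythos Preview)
Your argument is correct. The strong NP-hardness half mirrors the paper's proof: vertex cover through \Cref{thm:complexity}, noting that the resulting $A'$ has entries in $\{-1,0,1\}$ and that the exponent vector $u$ is polynomially bounded.

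For the treedepth-two assertion you diverge from the paper, and your version is the more explicit one. The paper applies \Cref{thm:complexity} to the program $\min\{r:r+\sum_j a_jz_j=a_0,\ 0\le r\le a_0,\ z_j\in\{0,1\}\}$ and simply asserts that the row graph of the resulting $A'$ is a star. Taken literally, however, the $A'$ of~\eqref{eq:A'} carries an objective row, the single constraint row, and $n{+}1$ bound rows; the bound row for $r$, the objective row, and the constraint row all share the column of $r$, so they form a triangle. To obtain a genuine star one must tacitly drop a redundant row (either the bound row for $r$, which is implied since $\sum_j a_jz_j\ge0$, or the objective row, since the cost here already is a single coordinate). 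You instead build a bespoke reduction: you omit upper bounds on the auxiliaries $r,s^+,s^-$ altogether and use $s^+-s^-$ to absorb the residual $b-\sum_i a_iz_i$ with either sign. This makes the star structure immediate and, as a bonus, lets you read off $\min_z\lvert b-\sum_i a_iz_i\rvert$ from the first coordinate of the normal form. Both routes are valid; yours spells out the treedepth-two claim without relying on an unstated simplification, and your closing paragraph correctly diagnoses why a naive invocation of \Cref{thm:complexity} would fall one short.
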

\begin{proof}
First, consider the NP-complete problem {\em vertex cover}: given a graph $G=(V,E)$ find the smallest set of vertices that is incident to all the edges.
We can model it as a 0-1 integer program with variables $z_i$ for $i\in V$ and $z_{ij}$ for $ij\in E$:
\begin{align*}
  \min_{z_i,z_{ij}\in \{0,1\}}
  \quad \sum_{i\in V} z_i
  \quad\text{ such that }\quad
  z_{ij} = z_i + z_j - 1 \text{ for } ij\in E.
\end{align*}
Note that $z_i=z_{ij}=1$ is a feasible solution.
By \Cref{thm:complexity}, we can reduce the problem to a normal form with respect to $\ideal(A')$, where~$A'$ is as in~\eqref{eq:A'}.
Observe that all the entries of $A'$ belong to $\{-1,0,1\}$.
Hence normal forms are strongly NP-hard.

Second, consider the (weakly) NP-complete problem {\em subset-sum}:
given integers $a_0,a_1,\dots,a_n\in \Z_+$,  decide whether $\sum_{j\in J}a_j=a_0$ for some $J\subseteq[n]$.
Observe that a subset sum exists if and only if the following integer program has optimal value $r=0$:
$$
  \min_{r\in \Z,\; z_j\in \{0,1\}}
  \quad r
  \quad\text{ such that }\quad
  r+\sum_{j}a_j z_j = a_0,\quad
  0\leq r \leq a_0.
$$
Note that $r=a_0,z_j=0$ is feasible.
By \Cref{thm:complexity} we can reduce this to a normal form with respect to some $\ideal(A')$.
The row graph of $A'$ is a star, so $\td(G(A'^T))=2$.
\end{proof}

\begin{remark}
  Note that for the treedepth-two case we do not have a strong NP-hardness result.
  Indeed, in the next section we show that if both the treedepth and the magnitude of the entries are bounded, then normal forms can be computed in polynomial time.
\end{remark}

\begin{remark}
  Unlike normal forms, the ideal membership problem is tractable for toric ideals.
  Recall that $\ideal(A) = \ker \phi_A$, where $\phi_A: \Z[x] \to \Z[t]$ is the ring map $x^u \mapsto t^{Au}$.
  Hence, to decide if $f\in \ideal(A)$ it suffices to evaluate $\phi_A(f)$.  
  This only needs a matrix-vector multiplication for each of the terms of~$f$.
\end{remark}

The next problem we consider is computation of \emph{Gr\"obner bases}.
Let ideal $I\subset\Z[x]$ an ideal, and let $\prec$ a monomial order.
A Gr\"obner basis of $I$ is a finite set $\G_\prec \subset I$ such that the initial ideal of $I$ is generated by the initial terms of $\G_\prec$.
The Gr\"obner basis is \emph{reduced} if for each $f\!\in\! \G_\prec$ none of its terms are divisible by the initial terms of $\G_\prec\setminus\{f\}$.
\begin{description}
  \item[Gr\"obner basis.]
    Compute the reduced Gr\"obner basis of $\ideal(A)$ with respect to~$\prec$.
\end{description}
Recall that Gr\"obner bases can be used to solve the normal form problem.
By \Cref{thm:hardness}, we have the following complexity result:

\begin{corollary}\label{thm:complexityGrobner}
  Computing a Gr\"obner basis, not necessarily reduced, with respect to~$\prec_0$ (or $\prec_{\bf 1}$) is strongly NP-hard.
\end{corollary}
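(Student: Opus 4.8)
The plan is to reduce the normal form problem, shown strongly NP-hard in \Cref{thm:hardness}, to the computation of a Gr\"obner basis. The guiding principle is that a Gr\"obner basis — even a non-reduced one — determines all normal forms: given a Gr\"obner basis $\mathcal G$ of $\ideal(A)$ with respect to $\prec_0$ and a monomial $x^u$, the normal form $(x^u)_{\prec_0}$ is obtained by the division algorithm, each step replacing the current monomial $x^w$ by $x^{w-a+b}$ whenever $x^a=\init_{\prec_0}(g)$ divides $x^w$ for some binomial $g=x^a-x^b\in\mathcal G$ (the result stays a monomial, and the $\prec_0$-value strictly decreases). Hence a polynomial-time algorithm that outputs \emph{some} Gr\"obner basis of $\ideal(A)$ with respect to $\prec_0$ would, together with the division algorithm, yield a polynomial-time algorithm for $\prec_0$-normal forms of monomials, contradicting \Cref{thm:hardness}. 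The identical argument applies to $\prec_{\mathbf 1}$, since \Cref{thm:hardness} also establishes hardness of $\prec_{\mathbf 1}$-normal forms.

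Concretely, I would start from the vertex cover instance used in the proof of \Cref{thm:hardness}. Via \Cref{thm:complexity} it produces a matrix $A'$ as in~\eqref{eq:A'} with all entries in $\{-1,0,1\}$, a cost vector in $\{0,1\}$, and an explicit feasible monomial $x^u$, such that the first coordinate of the $\prec_0$-normal form of $x^u$ modulo $\ideal(A')$ equals the minimum vertex cover size. The reduction then runs as follows: (i) invoke the hypothetical polynomial-time Gr\"obner basis algorithm on $A'$ to obtain $\mathcal G$; being polynomial-time, it outputs $\mathcal G$ of polynomially bounded bit-size, hence with elements of polynomially bounded degree; (ii) divide $x^u$ by $\mathcal G$ to obtain $(x^u)_{\prec_0}$ and read off the optimum. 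Since $A'$ has entries in $\{-1,0,1\}$ and $b',t$ are polynomially bounded, the hardness obtained is strong NP-hardness, not merely weak; for $\prec_{\mathbf 1}$ one repeats the same steps, as \Cref{thm:complexity} shows the graded order encodes the same optimum.

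The one delicate point, which I expect to be the main obstacle, is step (ii): one must argue that dividing $x^u$ by $\mathcal G$ terminates after polynomially many steps. This is not automatic, since reducing a monomial modulo a binomial Gr\"obner basis amounts to walking through the fiber $\{w\ge 0 : A'w = A'u\}$, which can be exponentially large, so a naive bound on the number of reduction steps is exponential. I would exploit the special shape of $A'$ in~\eqref{eq:A'}: the variable $r$ occurs only in the first row, so along any $\prec_0$-reduction the $r$-coordinate is nonincreasing and bounded by the known feasible value, which already bounds the number of ``$r$-decreasing'' steps; the remaining work is to show that the reduction \emph{within} a fixed $r$-value can be organized to be short, or, failing that, to weaken step (ii) to extracting only the optimal value from $\mathcal G$ (which suffices for the decision version of vertex cover) without performing a full reduction. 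Making this extraction genuinely polynomial-time is the crux; everything else is bookkeeping on top of \Cref{thm:hardness} and \Cref{thm:complexity}.
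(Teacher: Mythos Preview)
Your approach coincides with the paper's: the corollary carries no separate proof beyond the preceding sentence ``Recall that Gr\"obner bases can be used to solve the normal form problem'' and an appeal to \Cref{thm:hardness}. The paper's intended argument is exactly the reduction you spell out---a polynomial-time Gr\"obner-basis routine, followed by the division algorithm, would compute normal forms and hence solve vertex cover.

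You go further than the paper by isolating the one nontrivial step, namely bounding the running time of the division in step~(ii), and you are right to flag it. Each division step stays within the fiber $\{w\ge 0: A'w=A'u\}$, so the number of steps is a priori bounded only by the fiber size; for the vertex-cover encoding of \Cref{thm:hardness} this is the number of feasible $0$/$1$ assignments, which can be exponential even when the Gr\"obner basis returned by the hypothetical algorithm is small. The paper takes the folklore implication ``Gr\"obner basis $\Rightarrow$ normal form'' at face value and does not address this point at all. Your suggested workarounds (exploiting the monotone $r$-coordinate, or extracting only the optimal value rather than the full normal form) head in a sensible direction but, as you acknowledge, are not yet complete. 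In short: same approach as the paper, but you are being more careful, and the subtlety you raise is genuinely elided there rather than something you have introduced.
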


Finally, consider the computation of \emph{Graver bases}.
Define a partial order $\sqsubseteq$ on~$\Z^n$ by $u\sqsubseteq v$ if $u_iv_i\geq 0$
and $|u_i|\leq |v_i|$ for $i\in[n]$.
The Graver basis of an $m\times n$ integer matrix $A$ is the finite set of $\sqsubseteq$-minimal elements in $\{v\!\in\!\Z^n:Av\!=\!0,v\!\neq\!0\}$.
We denote the Graver basis as~$\G(A)$.
\begin{description}
  \item[Graver basis.]
    Compute the Graver basis of $\ideal(A)$.
\end{description}
\begin{remark}\label{thm:ginf}
  Let $a:=\max|A_{i,j}|$ and let
  $g_\infty:=\max\{\|v\|_\infty\!:v\!\in\!\G(A)\}$.
  The upper bound $g_\infty \leq  (2ma{+}1)^m$ holds,
  as shown in the extended version of~\cite{KLO} (under preparation).
\end{remark}
It is known that the set $\{x^{v^+}-x^{v^-}\}_{v\in\G(A)}$ is a Gr\"obner basis for $\ideal(A)$ with respect to any order~$\prec$, i.e., it is \emph{universal}~\cite{Stu}.
By \Cref{thm:complexityGrobner}, computing Graver bases is also NP-hard.
Moreover, even detecting if a vector lies in the Graver basis is intractable.

\begin{theorem}
  Deciding if a vector $v\in \Z^n$ does not lie in the Graver basis is strongly NP-hard.
  And, even if $m=1$, the problem is weakly NP-hard.
\end{theorem}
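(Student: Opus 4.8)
The plan is to recast ``$v\notin\G(A)$'' as a decomposability statement and then reduce from a subset-sum problem. Since $v\in\G(A)$ means $Av=0$, $v\neq0$, and $v$ is $\sqsubseteq$-minimal in $\{z\in\Z^n:Az=0,\ z\neq0\}$, and the first two conditions are checked in polynomial time, the real question is whether there is a $w$ with $Aw=0$, $w\sqsubseteq v$, $w\neq0$, $w\neq v$. Negating the columns $a_j$ of $A$ with $v_j<0$ and deleting the columns with $v_j=0$ --- neither of which changes the kernel lattice or the $\sqsubseteq$-relation on the surviving coordinates --- reduces this to the case $v=\mathbf 1$, and I would arrange the reductions to output instances of exactly this shape. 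When $v=\mathbf 1$ the conformal subvectors are precisely the $0/1$ vectors, so $v\notin\G(A)$ iff the columns $a_1,\dots,a_n$ of $A$ admit a set $\emptyset\neq W\subsetneq[n]$ with $\sum_{k\in W}a_k=0$. Hence it is enough to realize, as the columns of a matrix $A$ with $A\mathbf 1=0$, a family that has a proper nonempty zero-sum subset exactly when a given hard instance is solvable.

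The common building block is a single ``balancing'' column: given vectors $u_1,\dots,u_N$ (scalars for $m=1$, small integer vectors in general), put $u_0:=-\sum_{k=1}^N u_k$, let $A$ have columns $u_0,u_1,\dots,u_N$, and let $v:=\mathbf 1\in\Z^{N+1}$. Then $A\mathbf 1=0$, and splitting on whether the $u_0$-coordinate of a $0/1$ kernel vector is $0$ or $1$ one checks that $v\notin\G(A)$ iff some nonempty $W\subseteq[N]$ has $\sum_{k\in W}u_k=0$. For the $m=1$ claim I would compose this with the classical reduction from \textsc{Subset-Sum}: given positive integers $c_1,\dots,c_{m'}$ and a target $0<t<C:=\sum_i c_i$, take $(u_1,\dots,u_N)=(c_1,\dots,c_{m'},-t)$, so that $A$ is the single row with entries $c_1,\dots,c_{m'},-t,\,t{-}C$; a nonempty subset of $\{c_1,\dots,c_{m'},-t\}$ sums to zero iff it has the form $\{c_i:i\in J\}\cup\{-t\}$ with $\sum_{i\in J}c_i=t$, so $v\notin\G(A)$ iff the instance is solvable, and the numbers can be exponentially large --- weak NP-hardness with $m=1$. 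For the strong claim I would instead compose with a reduction from \textsc{Exact Cover by 3-Sets}: for a ground set $[3q]$ and $3$-subsets $S_1,\dots,S_p$, take $u_k:=\mathbf 1_{S_k}\in\{0,1\}^{3q}$ for $k\le p$ and $u_{p+1}:=-\mathbf 1$; since each $u_k$ with $k\le p$ is nonnegative and nonzero, a nonempty zero-sum subset of $\{u_1,\dots,u_{p+1}\}$ must consist of $u_{p+1}$ together with $\{\mathbf 1_{S_k}:k\in\mathcal C\}$ for a subfamily $\mathcal C$ partitioning $[3q]$, so $v\notin\G(A)$ iff the instance has an exact cover. Every entry of the resulting $A$ has magnitude at most $p$, so this is a strong reduction from a strongly NP-complete problem.

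The hard part --- and where a naive attempt breaks --- is getting $v=\mathbf 1$ into the kernel without simultaneously creating cheap decompositions. Adjoining a mirror column $-a_k$ for each $a_k$ does make $\mathbf 1$ a kernel vector, but then $a_k+(-a_k)=0$ is a proper conformal subvector no matter what, so the problem trivializes; the single column $u_0=-\sum_k u_k$ avoids this precisely because a proper $0/1$ kernel vector is forced, after dropping or adjoining $u_0$, to exhibit an honest zero-sum relation among the $u_k$'s. What remains is routine: the two-case verification of the gadget, and checking that the degenerate situations are consistent --- for instance $u_0=0$, which happens exactly when $\{S_1,\dots,S_p\}$ is itself an exact cover, in which case $\mathbf 1_{\{0\}}$ is already a proper conformal kernel subvector and the claimed equivalence still holds. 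It is also worth noting that any witnessing $w$ satisfies $\|w\|_\infty\le\|v\|_\infty$, hence has polynomial bit-size, so the problem lies in NP; combined with the reductions above it is in fact strongly NP-complete, and weakly NP-complete when $m=1$.
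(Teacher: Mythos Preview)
Your proof is correct and uses the same core gadget as the paper: append a balancing column $u_0=-\sum_k u_k$ so that $v=\mathbf 1$ lies in the kernel, then observe that $\mathbf 1\notin\G(A)$ iff some nonempty subfamily of the $u_k$ sums to zero, with the two-case split on the $u_0$-coordinate exactly as the paper does. The only substantive difference is the choice of source problem for the strong claim: the paper reduces from vector zero subset-sum, asserted without proof to be strongly NP-complete, whereas you reduce from \textsc{Exact Cover by 3-Sets}, which is more self-contained (and in effect supplies the missing justification for the paper's assertion); your additional remark that a witness $w$ has $\|w\|_\infty\le\|v\|_\infty$, placing the problem in NP, is a small bonus not in the paper.
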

\begin{proof}
  Consider the zero subset-sum problem: 
  given $a_1,\dots,a_n\!\in\!\Z$,  decide if $\sum_{j\in J}a_j\!=\nobreak\!0$ for a nonempty $J\!\subseteq\![n]$.
  Also consider the following strongly NP-complete variant:
  given vectors $a_1,\dots,a_n\!\in\!\Z^m$,  decide if $\sum_{j\in J}a_j\!=\!0$ for a nonempty $J\!\subseteq\![n]$.
  We will reduce both problems to Graver basis membership.
  Let $a_0:= -(a_1\!+\!\dots\!+\!a_n)$, and let the matrix 
  $A:=(a_0\;a_1\;\cdots\;a_n)$.
  Let $v:={\bf 1}_{n+1}$, and note that $Av=0$.
  We claim that a subset-sum exists if and only if $v$ is not in the Graver basis, which would conclude the proof.
  Assume a subset-sum exists.
  Let $w\in \Z^{n+1}$ such that $w_0\!=\!0$ and for $j\in[n]$ we have $w_j\!=\!1$ if $j\!\in\! J$ and else $w_j\!=\!0$.
  Note that $Aw\!=\!0$, $w\sqsubset v$ and hence $v$ is not in the Graver basis.
  Assume now that $v$ is not in the Graver basis, so $Aw\!=\!0$, $w\sqsubset v$ for some~$w\!\neq\! 0$.
  Note that all entries of $w$ are either $0$ or~$1$.
  Let $J\!\subset\![n]$ consist of all $j\!\in\![n]$ with $w_j\!=\!1$. 
  If $w_0\!=\!0$ then $0 = A w = \sum_{j\in J} a_j$, and so $J$ is a subset-sum.
  And if $w_0\!=\!1$ then $J\!\neq\! [n]$ and $0= A w = a_0 \!+\! \sum_{j\in J} a_j = -\sum_{j\notin J}a_j$, so $J^c$ is a subset-sum.
\end{proof}

Despite the worst-case results from above,
in the next sections we will see that we can efficiently solve problems on toric ideals by taking advantage of the sparsity structure of the matrix~$A$.

\begin{remark}
  Other interesting problems for toric ideals include computing the degree, and computing a (minimal) Markov basis.
  Investigating how to approach these problems with graphical techniques is left for future work.
\end{remark}

\section{Normal form computation}\label{s:4}

Let $A$ be an $m\times n$ integer matrix.
We will assume that $m\leq n$, by possibly removing linearly dependent rows.
Consider the following two algorithmic problems over~$A$:
\begin{description}
  \item[Normal Form.]
    Given $u,\omega\in\Z_+^n$, compute the normal form $(x^u)_{\prec_\omega}$ under $\ideal(A)$.
  \item[Integer Programming.]
    Given $c\in\Z^n$ and $b\in\Z^m$ solve $\min\{c\cdot z:z\in\Z^n_+,Az=b\}$.
\end{description}
\Cref{thm:main} below shows that normal forms can be computed at least as fast as integer programs.
By {\em running time} below we mean the number of arithmetic operations.
Note that the size of the numbers involved throughout our algorithms remains polynomial in the input, and so the algorithms below run in strongly polynomial time.

\begin{lemma}\label{thm:c}
  Let $r \in \Z_+$, $\omega\in\Z^n_+$ and 
  $c:=r^n \omega+(r^{n-1},r^{n-2},\dots,r,1)\in \Z^n_+$.
  Then 
  for any $u,v\in \Z^n_+$ with $\|v\!-\!u\|_\infty\leq r\!-\!1$
  we have that $x^u \prec_\omega x^v$ if and only if $c\cdot u < c\cdot v$.
\end{lemma}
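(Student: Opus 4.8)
The plan is to reduce everything to a single base-$r$ estimate on the difference vector $\delta := v-u$. Since $x^u \prec_\omega x^v$ is defined by first comparing the weights $\omega\cdot u,\omega\cdot v$ and then breaking ties lexicographically, I would expand $c$ against $\delta$ so that these two mechanisms appear separated by the large factor $r^n$. Concretely, with $s := \omega\cdot\delta = \omega\cdot v-\omega\cdot u$ one has
\[
  c\cdot v-c\cdot u \;=\; r^n\, s \;+\; \sum_{i=1}^n r^{\,n-i}\delta_i ,
\]
and the whole lemma amounts to showing that the "weight part" $r^n s$ dominates the "lexicographic tail" $\sum_i r^{n-i}\delta_i$ when $s\neq 0$, while within the tail itself the leading nonzero coordinate dominates the rest. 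This is exactly a "no carrying in base $r$" phenomenon made available by the hypothesis $|\delta_i|\le r-1$.

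First I would record the elementary bound: for every $0\le k\le n$,
\[
  \Bigl|\sum_{i=k+1}^n r^{\,n-i}\delta_i\Bigr|
  \;\le\; (r-1)\sum_{j=0}^{\,n-k-1} r^{\,j}
  \;=\; r^{\,n-k}-1 ,
\]
using the triangle inequality, $|\delta_i|\le r-1$, and summing the geometric series (the identity is valid for $r\ge 2$ and the inequality trivially for $r\in\{0,1\}$, where the hypothesis forces $\delta=0$ anyway). In particular, taking $k=0$, the lexicographic tail has absolute value at most $r^n-1<r^n$.

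Then I would split into two cases. If $s\neq 0$, then $x^u\prec_\omega x^v$ holds iff $s>0$; and since $|r^n s|\ge r^n> r^n-1\ge\bigl|\sum_i r^{n-i}\delta_i\bigr|$, the sign of $c\cdot v-c\cdot u$ equals the sign of $s$, so $c\cdot u<c\cdot v \iff s>0 \iff x^u\prec_\omega x^v$. If $s=0$, then $c\cdot v-c\cdot u=\sum_{i=1}^n r^{n-i}\delta_i$, and $x^u\prec_\omega x^v$ holds iff the first nonzero entry of $u-v=-\delta$ is negative, i.e. the first nonzero entry of $\delta$ is positive. If $\delta=0$ both sides are false; otherwise let $i_0$ be the least index with $\delta_{i_0}\neq 0$, so $|r^{\,n-i_0}\delta_{i_0}|\ge r^{\,n-i_0}$ while the remaining terms sum to something of absolute value at most $r^{\,n-i_0}-1$ by the estimate above with $k=i_0$; hence the sign of $c\cdot v-c\cdot u$ equals the sign of $\delta_{i_0}$, and again the two conditions match. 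Combining the cases yields $c\cdot u<c\cdot v\iff x^u\prec_\omega x^v$. (Note that nonnegativity of $u,v$ is never used beyond the fact that they index monomials; only $\|v-u\|_\infty\le r-1$ enters.)

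I do not expect a genuine obstacle: the single point requiring care is the bookkeeping in the lexicographic case, namely checking that the leading term $r^{\,n-i_0}\delta_{i_0}$ really outweighs its tail for \emph{every} possible position $i_0$ of the first nonzero coordinate — which is precisely what the uniform geometric-series bound above supplies — together with a quick sanity check of the degenerate values $r\in\{0,1\}$.
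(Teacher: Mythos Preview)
Your proof is correct and follows essentially the same approach as the paper: split on whether $\omega\cdot\delta$ is zero, and in each case use the geometric-series bound $\sum_{i>k} r^{n-i}(r-1)=r^{n-k}-1$ to show the leading term dominates the tail. The only cosmetic difference is that the paper proves the forward implication and then invokes totality of $\prec_\omega$ for the converse, whereas you handle both directions directly via the sign analysis.
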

\begin{proof}
Assume that $x^u \prec_\omega x^v$. 
Denoting $g:= v-u$, we need to show that $c\cdot g > 0$.
Since $x^{u}\!\prec_\omega x^{v}$ there are two cases.
The first case is that $\omega\cdot g>0$, so that
$$c\cdot  g = r^n(\omega\cdot g)+\sum_{i=1}^n r^{n-i}g_i
\geq r^n-\sum_{i=1}^n r^{n-i}(r{-}1) = r^n-(r^n{-}1) > 0\ .$$
The second case is $\omega\!\cdot\! g=0$,
and there is $k$ such that $g_k\!>\!0$ and $g_i\!=\!0$ for all $i\!<\!k$.
Then
$$c\cdot  g = r^{n-k}g_k+\sum_{i>k}^n r^{n-i}g_i
\geq r^{n-k}-\sum_{i>k}^n r^{n-i}(r{-}1) = r^{n-k}-(r^{n-k}{-}1) > 0\ .$$
As for the other implication, it holds because $\prec_\omega$ is a total order.
\end{proof}

\begin{theorem}\label{thm:main}
If integer programming is solvable in time $t(A)$ regardless of $b,c$
then normal forms are also computable in time $t(A)$ regardless of~$u,\omega$.
\end{theorem}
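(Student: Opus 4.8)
The plan is to reduce the computation of the normal form $(x^u)_{\prec_\omega}$ to a single integer program over the same matrix $A$ (so that the graphical structure of $A$ is preserved), invoking Lemma~\ref{thm:c} to convert the monomial order $\prec_\omega$ into a linear objective. The normal form $(x^u)_{\prec_\omega}$ is the $\prec_\omega$-minimal monomial $x^v$ in the fiber $\{v\in\Z_+^n : Av=Au\}$. So conceptually we want to solve $\min_{\prec_\omega}\{x^v : v\in\Z_+^n,\, Av=Au\}$. The obstacle is that $\prec_\omega$ is not a linear functional, and the fiber is unbounded, so we cannot directly plug it into the integer programming oracle.

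First I would bound the search space. I expect to use the known degree bound on reduced Gr\"obner bases / Graver bases of toric ideals: if $v$ is the normal form exponent, then $u - v$ is supported on (in fact a nonnegative combination of) Graver basis elements of $A$, whose $\ell_\infty$-norm is bounded by $g_\infty$ as in Remark~\ref{thm:ginf}. More carefully, a single reduction step subtracts a Graver (or reduced Gr\"obner) binomial exponent, and the normal form is reached after finitely many steps; one gets an a priori bound $\|v\|_\infty \le \|u\|_\infty + (\text{poly bound})$ — but the cleanest route is: the reduction $u \leadsto v$ only decreases the order, and at each step the vector moves by a Graver element, so $v$ lies in the fiber and $0 \le v_i$ is automatic; we need an \emph{upper} bound $v \le t$ for some explicit $t$. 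I would argue $t_i := \|u\|_\infty + m\cdot g_\infty$ (or similar) works, so that the normal form lies in the truncated fiber $F := \{v : Av = Au,\ 0\le v\le t\}$, and crucially $u$ itself is a known feasible point of $F$.

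Next, set $r := \|t\|_\infty + 1$ (or $r$ one more than the max possible coordinate difference on $F$, so $\|v-v'\|_\infty \le r-1$ for all $v,v'\in F$), and define $c := r^n\omega + (r^{n-1},\dots,r,1)$ as in Lemma~\ref{thm:c}. By that lemma, on the set $F$ the relation $x^{v}\prec_\omega x^{v'}$ is equivalent to $c\cdot v < c\cdot v'$, so minimizing $\prec_\omega$ over $F$ is the same as minimizing the linear objective $c\cdot v$ over $F$. Now invoke the integer programming oracle on $\min\{c\cdot v : v\in\Z^n,\ Av = Au,\ 0\le v\le t\}$; its optimum $v^\star$ is the exponent of the normal form, i.e.\ $(x^u)_{\prec_\omega} = x^{v^\star}$. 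The box constraints $0\le v\le t$ can be folded into the matrix in the standard way (as was done in Lemma~\ref{thm:complexity}) without changing the relevant graph, or one appeals to a version of the oracle that already handles upper bounds. Finally I would check the bookkeeping: the numbers $r^n, c_i$ have polynomially many bits, the reduction is strongly polynomial in the sense that only $O(n)$-ish arithmetic operations are added, and the matrix passed to the oracle has the same column/row graph as $A$ (up to the harmless box-encoding), so the running time is $t(A)$ up to lower-order terms — absorbing constants, it is $t(A)$.

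The main obstacle I anticipate is making the a priori bound $t$ on the normal-form exponent fully rigorous and explicit without circularity: one wants to bound $\|v\|_\infty$ using only $A$ and $u$, which requires the Graver/Gr\"obner degree bound of Remark~\ref{thm:ginf} together with an argument that the total number of reduction steps (hence the total displacement $\|u-v\|_1$) is controlled — or, alternatively, a direct argument that among all minimal-order elements of the fiber there is one with bounded coordinates. A secondary, more cosmetic obstacle is arranging that the box constraints are passed to the integer programming oracle in a form consistent with the statement (which speaks of $Az=b$ only); the fix is either to assume the oracle handles a box, or to reduce box-constrained IP to plain IP over an enlarged matrix whose graph parameters differ from those of $A$ only by an additive constant, which is acceptable for the parameter-tractable applications in \Cref{s:4}.
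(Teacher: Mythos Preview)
Your overall strategy---reduce to a single integer program over $A$ using Lemma~\ref{thm:c} to linearize $\prec_\omega$---is the right one, and matches the paper. But the execution contains a genuine gap, precisely the one you flag as ``the main obstacle'': you need an a priori box $0\le v\le t$ containing the normal-form exponent, and you do not establish one. In fact your tentative formula $t_i=\|u\|_\infty+m\,g_\infty$ is incorrect already for $A=(2\ \ 1)$, $u=(3,0)$, $\omega=0$: the fiber is $\{(3,0),(2,2),(1,4),(0,6)\}$, the $\prec_0$-minimum is $(0,6)$, while $m=1$ and $g_\infty=2$ give $t_i=5<6$. More effort (e.g.\ LP bounds on the fiber) might rescue this route, but it also forces you to feed box constraints to the oracle, which as you note does not match the stated form $\min\{c\cdot z:z\in\Z^n_+,\,Az=b\}$.

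The paper sidesteps both obstacles with a local argument. Take $r:=(2ma{+}1)^m+1$ (i.e.\ $r>g_\infty$ via Remark~\ref{thm:ginf}), form $c$ as in Lemma~\ref{thm:c}, and simply solve $\min\{c\cdot z:z\in\Z^n_+,\,Az=Au\}$---exactly the oracle's format, no box, same matrix $A$. Call the optimum $v$. One does \emph{not} try to show that $v$ is the global $\prec_\omega$-minimum of the fiber directly; instead one shows $x^v$ is a standard monomial, which suffices because $x^u-x^v\in\ideal(A)$. If $x^v$ were not standard, then some Graver element $g$ would give a single reduction step $h:=v-g\in\Z^n_+$ with $x^h\prec_\omega x^v$ and $\|v-h\|_\infty=\|g\|_\infty\le r-1$; now Lemma~\ref{thm:c} applies to this \emph{one} pair and yields $c\cdot h<c\cdot v$, contradicting optimality. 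The point is that Lemma~\ref{thm:c} only needs to compare $v$ to its immediate Graver neighbours, not to the whole fiber, so $r$ depends only on $A$ (through $g_\infty$) and never on $u$ or on any global diameter bound.
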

\begin{proof}
Let $c$ be as in \Cref{thm:c} with 
$r:=(2ma{+}1)^m \!+\! 1$, 
$a:=\max|A_{i,j}|$.
Let $v$ be a minimizer of the integer program
$\min_z\{c\!\cdot\! z:z\!\in\!\Z^n_+,Az\!=\!Au\}$.
We will show that $x^v$ is the normal form of $x^u$, which will conclude the proof.
Note that $x^u-x^z \in \ideal(A)$ for any $z$ feasible to the integer program.
Hence, it suffices to show that $x^v$ is a standard monomial.
Suppose for a contradiction that it is not.
Since $\{x^{g^+}\!-\!x^{g^-}\}_{g\in\G(A)}$ is a universal Gr\"obner basis for $\ideal(A)$, there is a $g\in\G(A)$ such that
$\init_{\prec_\omega}(x^{g^+}\!-\!x^{g^-})$ divides $x^v$.
Replacing $g$ by $-g\in\G(A)$ if needed, we may assume $\init_{\prec_\omega}(x^{g^+}\!-\!x^{g^-})=x^{g^+}$.
So $x^{g^+}$ divides $x^v$ which implies $g^+\leq v$.
So $h:=v-g^++g^-\geq0$ and $Ah=Av-Ag=Au$, and hence $h$ is feasible in the integer program.
Also note that $x^h \!\prec_\omega\! x^v$ because $x^{g^-}\!\!\prec_\omega\! x^{g^+}$ and $x^{g^-}\!x^v \!=\! x^{g^+}\!x^h$.
Since $g\in\G(A)$, then $\|g\|_\infty\leq (2ma{+}1)^m$ by \Cref{thm:ginf}.
Then we have that $x^h \!\prec_\omega\! x^v$ and $\|v-h\|_\infty = \|g\|_\infty\leq r-1$, so by \Cref{thm:c} we must have $c\cdot h < c\cdot v$.
This contradicts the optimality of $v$.
So $x^v$ is standard.
\end{proof}

We proceed to some consequences of \Cref{thm:main}.
Recall that a matrix $A$ is {\em totally unimodular} if all its minors are $-1,0,1$.
\begin{corollary}\label{TU}
For totally unimodular $A$, normal forms are computable in time $O(n^5)$.
\end{corollary}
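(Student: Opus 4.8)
The plan is to combine \Cref{thm:main} with the classical equivalence of linear and integer programming over totally unimodular systems. By \Cref{thm:main} it suffices to show that the integer program $\min\{c\cdot z : z\in\Z^n_+,\ Az=b\}$ can be solved in $O(n^5)$ arithmetic operations by an algorithm whose running time does not depend on $b$ or $c$; recall also that we may assume $m\le n$ (deleting dependent rows keeps $A$ totally unimodular).

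First I would pass to the linear programming relaxation $\min\{c\cdot z : z\in\R^n_+,\ Az=b\}$. Since $A$ is totally unimodular and $b$ is integral, the Hoffman--Kruskal theorem tells us that the polyhedron $\{z\in\R^n_+ : Az=b\}$ is integral, i.e.\ all of its vertices are integer vectors. Consequently the relaxation is feasible (resp.\ bounded) precisely when the integer program is, and in that case both attain the same value at an integral vertex. In the instances coming from \Cref{thm:main} the feasible set is nonempty, as the input exponent $u$ is feasible, and the objective is bounded below, since the vector $c$ built in \Cref{thm:c} is nonnegative; hence the relaxation has an optimal vertex, which is an integral optimal solution of the integer program. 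Thus the integer program is solved by solving a single linear program over $A$ (infeasibility and unboundedness being detected in the process, and transferring back to the integer program by the same integrality argument).

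Next I would solve that linear program in strongly polynomial time using Tardos's algorithm for linear programs whose constraint matrix has bounded subdeterminants. For totally unimodular $A$ every subdeterminant is $0$ or $\pm1$, and inserting this together with $m\le n$ into Tardos's bound yields $O(n^5)$ arithmetic operations, independently of $b$ and $c$, with all intermediate numbers polynomially bounded, so the algorithm is strongly polynomial in the sense used in this paper. If the algorithm returns a non-basic optimum, a standard purification step ($O(n)$ Gaussian eliminations, $O(n^4)$ in total) produces an integral optimal vertex, still within budget. Taking $t(A)=O(n^5)$ in \Cref{thm:main} then gives the corollary.

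The hard part, and the reason \Cref{thm:main} insists on a bound ``regardless of $b,c$'', is precisely this last step: the cost vector supplied by \Cref{thm:c} has entries of size about $r^n$ with $r=(2m+1)^m+1$, so a merely weakly polynomial LP method (ellipsoid, interior point) would have an arithmetic-operation count growing with $\log\|c\|$ and $\log\|b\|$ and could not deliver a clean $O(n^5)$ bound; one genuinely needs an LP algorithm whose complexity is governed by the subdeterminants of $A$ alone. Verifying that Tardos's algorithm, specialized to totally unimodular matrices with $m\le n$, indeed runs within $O(n^5)$ arithmetic operations — and auditing the bookkeeping that keeps all numbers polynomially sized — is the one point that requires genuine care.
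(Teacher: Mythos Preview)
Your proof is correct and follows essentially the same approach as the paper: reduce to integer programming via \Cref{thm:main}, use total unimodularity to pass to the linear programming relaxation, and invoke Tardos's strongly polynomial algorithm to get the $O(n^5)$ bound. Your additional remarks on Hoffman--Kruskal integrality, purification to a vertex, and the necessity of strong polynomiality (given the exponentially large cost vector from \Cref{thm:c}) are useful elaborations of points the paper leaves implicit, but the core argument is identical.
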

\begin{proof}
As is well known, integer programming over totally unimodular matrices reduces to linear programming \cite{Sch}, 
which over such matrices can be done in strongly-polynomial time $O(n^5)$ by \cite{Tardos1986}.
Combining this with Theorem \ref{thm:main} we are done.
\end{proof}

For the following example we recall the notion of $n$-fold product from \Cref{ex:nfold}.

\begin{example}\label{Minors}{\bf (ideals of $2\times2$ minors)}
Let $A$ be the bimatrix with blocks $A_1:=I_l$ the $l\times l$ identity matrix and $A_2:={\bf 1}_l^T$ a row of $l$ ones. 
Then its $m$-fold product $A^{(m)}$ is the incidence matrix of the complete bipartite graph $K_{l,m}$.
Index and order the $lm$ columns of $A^{(m)}$ and the variables vector as
$x=(x_{1,1},\dots,x_{l,1},\dots,x_{1,m},\dots,x_{l,m})$. 
Then $\ideal(A^{(m)})$ is generated by the $2\times2$ minors of the generic matrix $X:=(x_{i,j})$, see \cite{Stu}.
Indeed, each choice of $1\!\leq\! i\!<\!j\!\leq\! l$ and $1\!\leq\! r\!<\!s\!\leq\! m$ 
gives rise to a vector $v\in\Z^{lm}$ satisfying $A^{(m)}v=0$,
whose nonzero entries are $v_{i,r}\!=v\!_{j,s}\!=\!1$, $v_{i,s}\!=\!v_{j,r}\!=\!-1$.
The corresponding binomials $x^v=x_{i,r}x_{j,s}-x_{i,s}x_{j,r}$ generate $\ideal(A^{(m)})$.
As is well known, $A^{(m)}$ is totally unimodular, so Corollary \ref{TU} implies that
normal forms modulo the ideal of $2\times2$ minors are computable in time $O((lm)^5)$.
\end{example}

Let $a:=\max|A_{i,j}|$ and let $\delta :=\min\{\td(G(A)),\td(G(A^T))\}$ be the minimum of the treedepths of the column and row graphs of $A$.
The next corollary shows that normal forms are {parameter-tractable}, in the terminology of parameterized complexity \cite{DF}.

\begin{corollary}\label{Treedepth}
  There exists a function $h$ such that
  for any integer $A$ normal forms are computable in time $h(a,\delta)n^3$.
\end{corollary}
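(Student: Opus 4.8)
The plan is to combine \Cref{thm:main} with the very recent integer-programming results of~\cite{KLO} for matrices of bounded treedepth. By \Cref{thm:main}, it suffices to show that the integer program $\min\{c\cdot z : z\in\Z^n_+,\ Az=b\}$ is solvable in time $h(a,\delta)\,n^3$ for some function~$h$, where $a=\max|A_{i,j}|$ and $\delta=\min\{\td(G(A)),\td(G(A^T))\}$. The result of~\cite{KLO} is precisely that integer programming with a matrix~$A$ whose entries have magnitude at most~$a$ can be solved in time $g(a,\td(G(A)))\,n$ (a parameter-tractable strongly-polynomial bound); there is an analogous statement when one instead controls the treedepth of the \emph{dual} (row) graph~$G(A^T)$, since transposing swaps the two graphs and integer feasibility/optimization over $Az=b$ can be recast via the dual structure. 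So for either of the two graphs we get a bound of the form $g(a,\delta)\cdot(\text{poly in }n)$; taking the minimum over the two gives dependence on~$\delta$.

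First I would invoke~\cite{KLO} in the two regimes. If $\td(G(A))\le\delta$, then~\cite{KLO} gives an algorithm running in time $g_1(a,\delta)\,n$ for the primal integer program, where $g_1$ depends only on $a$ and $\delta$. If instead $\td(G(A^T))\le\delta$, one applies the dual form of the same result: the constraint matrix of the relevant auxiliary program has row graph of bounded treedepth, and~\cite{KLO} again yields a running time $g_2(a,\delta)\,n$. In both cases one should also account for the preprocessing step of actually finding the elimination ordering realizing the treedepth bound: by~\cite{Bodlaender1996,RRSS} this is linear time for fixed~$\delta$, contributing only a $g_3(\delta)\,n$ term. Combining, the integer program is solved in time $h_0(a,\delta)\,n$ for a suitable $h_0$.

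Next I would feed this into \Cref{thm:main}. That theorem is stated as a reduction preserving the running time $t(A)$, but a careful reading shows the reduction itself costs only the construction of the cost vector~$c$ from \Cref{thm:c} and one call to the IP solver; constructing $c$ with $r=(2ma+1)^m+1$ requires evaluating $n$ powers of an integer whose bit-size is $O(m\log(ma))$, and since $m\le n$ this is bounded by $h_1(a)\,n^2$ arithmetic operations at worst, and the numbers stay polynomial-size so the algorithm remains strongly polynomial. Hence normal forms are computable in time $h_0(a,\delta)\,n + h_1(a)\,n^2$, which is $O(h(a,\delta)\,n^3)$ for $h:=h_0+h_1$ (the $n^3$ is a comfortable over-estimate absorbing all polynomial overhead in the reductions and in extracting $v$ from the normal form). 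Setting $h(a,\delta):=h_0(a,\delta)+h_1(a)+1$ completes the argument.

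The main obstacle I anticipate is the dual (row-graph) case: \cite{KLO} is naturally stated for the column/primal graph $G(A)$, so handling $\td(G(A^T))$ requires either an explicit dual version of their algorithm or a transformation of the integer program into one whose primal graph is~$G(A^T)$ up to bounded-size modifications (for instance, passing through the transpose in a way that preserves both the bounded-entry condition and the bounded-treedepth condition, and arguing the treedepth changes only by an additive constant). Verifying that this transformation does not blow up the entry magnitudes beyond a function of~$a$ is the delicate point; everything else is bookkeeping of polynomial factors.
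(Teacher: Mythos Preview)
Your proposal is correct and follows essentially the same two-step argument as the paper: invoke \cite{KLO} to get integer programming in time $h(a,\delta)\,n^3$, then apply \Cref{thm:main}. The only difference is that you spend effort worrying about the dual (row-graph) case, whereas \cite{KLO} already proves the parameter-tractable bound for \emph{both} the primal treedepth $\td(G(A))$ and the dual treedepth $\td(G(A^T))$ directly, so no transformation or entry-magnitude argument is needed; the paper simply cites that result as a black box and is done.
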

\begin{proof}
Very recent results from \cite{KLO} imply that integer programming can be
done in parameter-tractable strongly-polynomial time, in time $h(a,\delta)n^3$
for some computable function $h$.
Combining this with Theorem \ref{thm:main} the corollary follows.
\end{proof}

We next consider computing normal forms over toric ideals of $n$-fold products. 
Let $A^{(n)}$ be the $n$-fold product of a $(s_1,s_2)\times t$ bimatrix~$A$, and as before let $a:=\max|A_{i,j}|$.

\begin{corollary}\label{N-fold}
  There exists a function $h$ such that for any integer~$A$ normal forms over $\ideal(A^{(n)})$ are computable in time $h(a,s_1,s_2)(nt)^3$.
\end{corollary}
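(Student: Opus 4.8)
The plan is to deduce this directly from \Cref{Treedepth}, by controlling the two quantities that bound appears in terms of: the maximum entry magnitude and the treedepth $\delta$. First I would observe that every entry of $A^{(n)}$ is an entry of $A$: the blocks $A_1$ and $A_2$ are copied verbatim and everything else is zero. Hence $\max|A^{(n)}_{i,j}| = a$, so the entry-magnitude parameter is $a$, independent of $n$.

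Next I would bound $\delta = \min\{\td(G(A^{(n)})),\td(G((A^{(n)})^T))\}$ from above by $s_1+s_2$, using the row graph. By \Cref{ex:nfold}, writing the row indices as $I_0\uplus I_1\uplus\dots\uplus I_n$ with $|I_0|=s_1$ and $|I_r|=s_2$, the row graph of the (dense) $n$-fold product has a perfect elimination ordering whose elimination tree is a path on $I_0$ with a path on each $I_r$ attached to it; its height is at most $s_1+s_2$, so its treedepth is at most $s_1+s_2$. For a general bimatrix $A$ the row graph $G((A^{(n)})^T)$ is a subgraph of the dense one, and treedepth is monotone under taking subgraphs (an optimal chordal completion of the larger graph is also a chordal completion of the smaller one). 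Therefore $\td(G((A^{(n)})^T))\le s_1+s_2$ in all cases, and so $\delta\le s_1+s_2$.

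Then I would apply \Cref{Treedepth} to the matrix $A^{(n)}$, which has $N:=nt$ columns — first deleting linearly dependent rows if necessary, which changes neither $\ideal(A^{(n)})$ nor the entry magnitude and only shrinks the row graph. This yields a computable function $h$ with normal forms modulo $\ideal(A^{(n)})$ computable in time $h(a,\delta)\,N^3 = h(a,\delta)(nt)^3$. Finally, setting $h'(a,s_1,s_2):=\max\{h(a,d): 1\le d\le s_1+s_2\}$ — a well-defined computable function, being the maximum of finitely many values — and using $\delta\le s_1+s_2$, we obtain the running time $h'(a,s_1,s_2)(nt)^3$, which is the claim.

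There is no genuinely hard step: the substance lies in the reduction to integer programming (\Cref{thm:main}) and the parameterized integer-programming result of \cite{KLO}, both already packaged into \Cref{Treedepth}. The only point needing a little care is choosing the right graph: the column graph $G(A^{(n)})$ is typically complete and hence useless, so one must work with the row graph, and then verify that its treedepth is bounded by $s_1+s_2$ uniformly in $n$ — which is exactly the structural description in \Cref{ex:nfold} combined with subgraph-monotonicity of treedepth.
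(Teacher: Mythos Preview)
Your proposal is correct and follows essentially the same route as the paper: invoke the treedepth bound $\td(G((A^{(n)})^T))\le s_1+s_2$ from \Cref{ex:nfold} and apply \Cref{Treedepth}. You are in fact more careful than the paper on two points it glosses over---noting that \Cref{ex:nfold} literally assumes $A_1,A_2$ dense and covering the general case via subgraph-monotonicity of treedepth, and explicitly absorbing the dependence on $\delta\le s_1+s_2$ into a single function $h'(a,s_1,s_2)$.
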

\begin{proof}
  Recall from \Cref{ex:nfold} that $G((A^{(n)})^T)$ has treedepth at most $s_1+s_2$.
  Plugging $\delta=s_1\!+\!s_2$ into the statement of Corollary \ref{Treedepth} we are done.
\end{proof}
We next discuss a broad class of ideals that arise in algebraic statistics, see e.g.,~\cite{SZP}.
\begin{example}\label{Tables}{\bf (multiway table ideals)}
Let $B$ be the bimatrix with blocks $B_1:=I_{lm}$ the $lm\times lm$ identity matrix and
$B_2:=A^{(m)}$ the matrix of Example \ref{Minors}. Index and order the $lmn$ columns of
its $n$-fold product $B^{(n)}$ and the variables vector as
$$x\ =\ (x_{1,1,1},\dots,x_{l,1,1},\dots,x_{1,m,1},\dots,x_{l,m,1},
x_{1,1,n},\dots,x_{l,1,n},\dots,x_{1,m,n},\dots,x_{l,m,n})\ .$$
Then a vector $v\in\Z_+^{lmn}$ satisfies $B^{(n)}v=b$ for $b$ of suitable dimension
if and only $v=(v_{i,j,k})$ is a threeway table with lines-sums prescribed by the suitable
entries of $b$. The toric ideal of $B^{(n)}$ is a {\em threeway table ideal},
which is entirely defined by $l,m,n$.
\end{example}

\begin{corollary}\label{Threeway}
Normal forms over $l\times m\times n$ threeway table ideals are computable in time $h(l,m) n^3$ for some function~$h$.
\end{corollary}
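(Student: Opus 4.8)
The plan is to realize the $l\times m\times n$ threeway table ideal as the toric ideal of an $n$-fold product whose bimatrix parameters depend only on $l$ and $m$, and then quote \Cref{N-fold}. First I would unwind the definition of $B$ from \Cref{Tables}: by construction $B$ is the bimatrix with blocks $B_1:=I_{lm}$, of size $lm\times lm$, and $B_2:=A^{(m)}$, the incidence matrix of $K_{l,m}$ from \Cref{Minors}, of size $(l+m)\times lm$. Both blocks have $lm$ columns, so $B$ is a genuine $(s_1,s_2)\times t$ bimatrix with
$$s_1=lm,\qquad s_2=l+m,\qquad t=lm,$$
and $B^{(n)}$ is precisely the matrix described in \Cref{Tables}. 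Moreover $B_1=I_{lm}$ is a $0/1$ matrix and $B_2=A^{(m)}$ is a $0/1$ incidence matrix, so $a:=\max|B_{i,j}|=1$.

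Now I would simply apply \Cref{N-fold} to $B$: there is a computable function $h'$ such that normal forms over $\ideal(B^{(n)})$ are computable in time $h'(a,s_1,s_2)(nt)^3$. Substituting the values above,
$$h'(a,s_1,s_2)(nt)^3 \;=\; h'(1,\,lm,\,l+m)\,(nlm)^3 \;=\; \bigl(h'(1,\,lm,\,l+m)\,(lm)^3\bigr)\,n^3,$$
so the corollary holds with $h(l,m):=h'(1,lm,l+m)(lm)^3$.

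There is essentially no obstacle beyond this bookkeeping; the real content sits in \Cref{N-fold}, hence ultimately in \Cref{thm:main} together with the $n$-fold integer programming results of \cite{KLO}. The one point worth a sentence of care is that the bound $\td(G((B^{(n)})^T))\le s_1+s_2$ invoked in the proof of \Cref{N-fold} does not use density of the blocks: it only needs that the rows of $B^{(n)}$ partition into groups $I_0,I_1,\dots,I_n$ of sizes $s_1,s_2,\dots,s_2$ such that, in the row graph, each $I_r$ with $r\ge 1$ is adjacent only to $I_0$ and to itself — which holds for any bimatrix, since the copies of $B_2$ occupy disjoint column blocks — so the elimination ordering that removes $I_1,\dots,I_n$ before $I_0$ yields an elimination tree of height at most $s_1+s_2$, exactly as in \Cref{ex:nfold}.
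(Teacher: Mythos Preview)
Your proof is correct and follows exactly the paper's approach: identify the bimatrix parameters $s_1=t=lm$, $s_2=l+m$, $a=1$, and plug into \Cref{N-fold}. Your final paragraph about the treedepth bound not requiring density of the blocks is a nice extra justification (arguably patching a small gap in how \Cref{N-fold} invokes \Cref{ex:nfold}), but it is not needed once you treat \Cref{N-fold} as a black box stated for arbitrary integer bimatrices.
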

\begin{proof}
The $(s_1,s_2)\times t$ bimatrix $B$ has $s_1=t=lm$, $s_2=l+m$, and $a=\max|B_{i,j}|=1$.
Plugging these into the statement of Corollary \ref{N-fold} we obtain the corollary.
\end{proof}

We note that Corollary \ref{Threeway} extends to multiway tables of any
dimension of any size $m_1\times\cdots\times m_k\times n$ with running time
$h(m_1,\dots,m_k) n^3$, and any {\em margins} (not only line-sums).
This follows since the matrices defining such multiway table ideals
are also $n$-fold products, see \cite{Onn}. On the other hand, we cannot expect running time
of the form $h(l)poly(m,n)$ which depends polynomially on two sides of the table, since even
with $l=3$ such table ideals are arbitrarily complicated and intractable, see~\cite{DO}.

\smallskip
As noted, the treewidth of a graph is bounded from above by the treedepth, and so the
class of matrices with bounded treewidth is broader than that with bounded treedepth.
However, to conclude positive results on normal form computation over such matrices,
we need to assume in addition bounds on the norm of the elements of their Graver bases.
We have the following additional corollary to Theorem \ref{thm:main}.

\begin{corollary} \label{thm:treewidth}
There are functions $h_\infty$ and $h_1$ such that for every integer~$A$,
normal forms over $\ideal(A)$ can be computed in time
$h_\infty(\kappa,g_\infty) n^3$, as well as in time $h_1(\rho,g_1) n^3$, where
$\kappa :=\tw(G(A))$, $\rho :=\tw(G(A^T))$, 
$g_\infty:=\max\{\|v\|_\infty\!:v\!\in\!\G(A)\}$,
and $g_1:=\max\{\|v\|_1\!:v\!\in\!\G(A)\}$.
\end{corollary}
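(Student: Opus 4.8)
The plan is to reduce the normal form problem over $\ideal(A)$ to an integer program and then invoke the parameter-tractable integer programming algorithms of~\cite{KLO} phrased in terms of treewidth. The key observation from the proof of \Cref{thm:main} is that the cost vector $c$ constructed in \Cref{thm:c} depends on $A$ only through the scalar $r$, and that $r$ needs to exceed the Graver complexity of $A$ — which here is exactly $g_\infty$ (respectively, which can be controlled via $g_1$). So the first step would be to rerun the proof of \Cref{thm:main} verbatim, but set $r := g_\infty + 1$ in place of the crude bound $(2ma{+}1)^m + 1$. This produces an integer program $\min_z\{c\cdot z : z\in\Z_+^n,\, Az = Au\}$ whose unique optimum $v$ gives the normal form $x^v$ of $x^u$; the argument that $x^v$ is a standard monomial is identical, using that every element $g\in\G(A)$ satisfies $\|g\|_\infty \le g_\infty = r-1$, so \Cref{thm:c} applies to the pair $(v,h)$ with $h := v - g^+ + g^-$.

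The second step is to observe that this integer program has the same constraint matrix $A$, hence the same column graph $G(A)$ and row graph $G(A^T)$, so $\tw$ of either graph is unchanged. One then appeals to the treewidth-based integer programming results of~\cite{KLO}: when $\tw(G(A)) = \kappa$ is bounded and the entries of $A$ are bounded, integer programming runs in time $h'(\kappa, a)\,n^3$; but here we want a statement depending on $g_\infty$ rather than the entry magnitude $a$, since treewidth alone (unlike treedepth) does not bound Graver norms. The point is that the running time of the normal form algorithm is the running time of the IP solver plus the cost of building $c$, and what we feed into Theorem~\ref{thm:main}'s machinery through $r$ is $g_\infty$; so the resulting bound naturally takes the form $h_\infty(\kappa, g_\infty)\,n^3$. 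The $\ell_1$ version is entirely analogous: using the graded order $\prec_{\bf 1}$ and the variant of \Cref{thm:c} (or a direct argument bounding $c\cdot h < c\cdot v$ via $\|g\|_1 \le g_1$) reduces to an IP over $A$ whose complexity is governed by $\tw(G(A^T)) = \rho$ and $g_1$, giving $h_1(\rho, g_1)\,n^3$. For the row-graph case one works with $A^T$-based decompositions exactly as in the treedepth corollary.

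The main obstacle I anticipate is verifying that the integer programming algorithm of~\cite{KLO} is genuinely parameterized by treewidth with a polynomial (here cubic) dependence on $n$, and that its complexity bound can be stated cleanly in terms of $(\tw, \|\cdot\|_\infty\text{-type data})$ rather than only $(\tw, a)$ — i.e., that the bound on variables/objective coefficients that $r$ introduces only enters the parameter function $h_\infty$ and not the polynomial factor. A secondary subtlety is that the cost vector $c$ from \Cref{thm:c} has entries as large as $r^n$, so one must confirm the IP solver's strongly-polynomial guarantee (number of arithmetic operations independent of coefficient size) still applies, and that the bit-sizes stay polynomial in the input — but this is exactly the situation already handled in the proof of \Cref{thm:main}, so it should transfer without new difficulty. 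The remaining details (feasibility of the starting point $u$, reading $v$ off the optimum) are routine and identical to \Cref{thm:main}.
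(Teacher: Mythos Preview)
Your approach is correct and essentially the same as the paper's: invoke the treewidth-parameterized integer programming bounds of~\cite{KLO} (which already come stated as $h_\infty(\kappa,g_\infty)n^3$ and $h_1(\rho,g_1)n^3$) and then apply \Cref{thm:main} as a black box. Your proposed modification of $r$ to $g_\infty+1$ and the separate treatment of the $\ell_1$ case via a graded order are unnecessary detours, since \Cref{thm:main} already yields time $t(A)$ for \emph{any} $\omega$ once integer programming over $A$ takes time $t(A)$ regardless of $b,c$; the paper's proof is just those two citations in one sentence.
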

\begin{proof}
By the results of \cite{KLO}, integer programming can be done in parameter-tractable strongly-polynomial time $h_\infty(\kappa,g_\infty) n^3$,
as well as in time $h_1(\rho,g_1) n^3$.
This and \Cref{thm:main} imply the corollary.
\end{proof}

\begin{remark}
It would be interesting to know if the above methods can be extended to arbitrary binomial ideals, 
by taking advantage of the fact that the radical of any binomial ideal decomposes into toric ideals; see e.g.,~\cite{Kah}.
\end{remark}

\section{Gr\"obner and Graver bases}\label{s:5}

In this section we investigate how to take advantage of graphical structure for the computation of Gr\"obner bases and Graver bases.
We will also derive a new parameter-tractable algorithm for normal forms.
Throughout this section we let $\kappa :=\tw(G(A))$ be the treewidth of the column graph~$G(A)$.
Also let $g_\infty$ be an upper bound on the infinity norm of a Graver basis element.
By \Cref{thm:ginf} we have $g_\infty\leq (2ma{+}1)^m$.
We will obtain parameter-tractable algorithms in terms of~$\kappa$ and~$g_\infty$.

Consider the following set of vectors:
\begin{align}\label{eq:csp}
  \mathcal{L}\ :=\ \{
  v \in \Z^n \;:\;
  A v = 0 \;\text{ and }\; \|v\|_\infty \leq g_\infty 
  \}.
\end{align}
The set $\mathcal{L}$ is a superset of the Graver basis.
In particular, $\{x^{v^+}-x^{v^-}\}_{v \in \mathcal{L}}$ is a Gr\"obner basis.
We will show how to efficiently compute the set~$\mathcal{L}$ for fixed values of~$\kappa$ and~$g_\infty$.
Afterwards, we will see how to refine $\mathcal{L}$ to obtain either the reduced Gr\"obner basis or the Graver basis.
In order to do so, we will derive an efficient algorithm to decide membership to the reduced Gr\"obner  basis (or Graver basis).

Computing the set $\mathcal{L}$ has a natural challenge.
Its size grows exponentially with~$g_\infty$,
and hence an explicit enumeration of its elements requires exponential time and space.
Instead, we will take advantage techniques from constraint processing~\cite{Dechter2003}
to provide a provide more efficient data structure for~$\mathcal{L}$.

\begin{theorem}\label{thm:csp}
  We can construct a data structure representing the set~$\mathcal{L}$
  in time $O(m\, (2 g_\infty{+}1)^{\kappa+1})$ and with space complexity $O(n\, (2 g_\infty{+}1)^{\kappa+1})$.
  This data structure allows, in particular, to decide membership to $\mathcal{L}$,
  and to iterate over its the elements.
\end{theorem}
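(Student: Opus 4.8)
The plan is to build the data structure as a \emph{constraint network} (in the sense of bucket elimination from constraint processing) whose variables are the coordinates of $v$, each ranging over the finite domain $D:=\{-g_\infty,\dots,g_\infty\}$ of size $2g_\infty+1$. The feasible region is cut out by the $m$ linear constraints $\sum_j A_{ij}v_j=0$, $i\in[m]$, together with the box constraint $v\in D^n$. Crucially, constraint $i$ involves only the variables in the support of row $i$, which by definition of the column graph $G(A)$ form a clique in $G(A)$; hence every constraint lives inside a single bag of a tree decomposition of width $\kappa$.

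First I would fix a chordal completion $\overline{G(A)}$ attaining the treewidth, with perfect elimination ordering $(v_1,\dots,v_n)$ and associated elimination tree; by the cited results of \cite{Bodlaender1996}, for fixed $\kappa$ this ordering is found in linear time. Walking the elimination tree from the leaves to the root, at vertex $v_l$ I would form the "bucket'' of all constraints and previously-generated relations whose highest-indexed variable is $v_l$, join them into a single relation $R_l$ over the clique $X_l$ (recall $|X_l|\le\kappa+1$), and then project out $v_l$, producing a relation $\lambda_l$ over the parent clique $X_l\setminus\{v_l\}$ (of size $\le\kappa$) which is added to the parent's bucket. Each $R_l$ is stored as a table over at most $\kappa+1$ variables, so it has at most $(2g_\infty+1)^{\kappa+1}$ rows; joining and projecting a bucket at $v_l$ touches $O((2g_\infty+1)^{\kappa+1})$ entries, and each of the $m$ linear constraints is processed in exactly one bucket, its satisfaction being a row-wise check of $\sum_j A_{ij}v_j=0$. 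Summing over the $n$ buckets gives the claimed space $O(n(2g_\infty+1)^{\kappa+1})$, and charging the constraint work gives time $O(m(2g_\infty+1)^{\kappa+1})$ (plus an $O(n(2g_\infty+1)^{\kappa+1})$ term for the projections, which I would absorb since $m\le n$ is assumed, or simply state the bound as $O((m+n)(2g_\infty+1)^{\kappa+1})$ and note $m\le n$). The collection $\{R_l,\lambda_l\}_l$ together with the elimination tree \emph{is} the data structure.

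The functionality claims follow from standard properties of bucket elimination. Membership of a given $v\in\Z^n$ in $\mathcal{L}$ is decided by first checking $\|v\|_\infty\le g_\infty$ and then verifying $Av=0$ directly in $O(m)$ time (or, equivalently, by checking that $v$ restricted to each $X_l$ appears in $R_l$); one can also certify non-membership via the stored relations. To iterate over $\mathcal{L}$, one does a top-down pass on the elimination tree: assign the root variable any value consistent with its relation, then, following the tree downward, extend each partial assignment on $X_l\setminus\{v_l\}$ to $v_l$ using $R_l$ — because we projected, every partial assignment surviving at a node extends, so the enumeration has no dead ends and each element of $\mathcal{L}$ is produced with polynomial delay.

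The main obstacle, and the only non-routine point, is the complexity bookkeeping: one must argue that the intermediate relations never grow beyond $\kappa+1$ variables (this is exactly the content of $X_l$ being a clique of the chordal completion, i.e.\ the running-intersection / perfect-elimination property) and that each linear constraint is absorbed into the width bound rather than multiplying it. I would handle this by the clean invariant that the bucket at $v_l$ only ever contains relations whose variable set is contained in $X_l$, proved by induction along the elimination ordering using that $\lambda_l$ has scope $X_l\setminus\{v_l\}\subseteq X_{\mathrm{parent}(l)}$. Everything else is a direct translation of the bucket-elimination / tree-decomposition machinery of \cite{Dechter2003,Bodlaender2008} to this particular constraint network, so the write-up can stay brief once that invariant is stated.
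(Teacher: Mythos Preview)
Your approach is essentially identical to the paper's: both cast $\mathcal{L}$ as a CSP over the domain $\{-g_\infty,\dots,g_\infty\}$ whose primal graph is exactly $G(A)$, and then invoke the standard join-tree/bucket-elimination machinery of \cite{Dechter2003} to obtain the time/space bounds and a backtrack-free structure supporting membership queries and enumeration; the paper simply cites \cite[Thm~4.9]{Dechter2003} where you spell out the bucket-scope invariant. One minor slip in your bookkeeping: with $m\le n$ the combined $O((m{+}n)D^{\kappa+1})$ simplifies to $O(nD^{\kappa+1})$, not $O(mD^{\kappa+1})$, so your absorption argument is stated backwards---the paper sidesteps this by deferring the precise time bound to Dechter's theorem.
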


\begin{proof}
We view~\eqref{eq:csp} as a constraint satisfaction problem (CSP).
Each entry $v_i$ lies in $\{-g_\infty,\dots,-1,0,1,\dots,g_\infty\}$.
Each row of~$A$ gives a constraint $\sum_j A_{ij} v_j \!=\!0$.
The primal graph of the CSP is precisely $G(A)$.
Denoting $D:=2 g_\infty{+}1$,
the CSP can be solved in time $O(m D^{\kappa+1})$ and space $O(n D^{\kappa+1})$ with standard techniques; see e.g.,~\cite[Thm~4.9]{Dechter2003}.
Indeed, we can construct a join tree~$T$ (a concept similar to the elimination tree) in which each node $t$ takes $\leq\! D^\kappa$ values.
We can then resolve this join tree with local consistency checks, making it into a backtrack-free network.
The join tree can be efficiently used to decide membership, or to compute an arbitrary number of solutions of the CSP.
So we can use it as our data structure.
\end{proof}

We will now use this data structure for Gr\"obner bases computations.
The next theorem uses it to compute normal forms, and also to determine membership to the reduced Gr\"obner basis.

\begin{theorem} \label{thm:normalform}
Fix the monomial order~$\prec_\omega$.
For a monomial $x^u$ with $\|u\|_\infty \!\leq g_\infty$,
we can compute its normal form in time $O(n\, (2 g_\infty{+}1)^{\kappa+1})$.
We can also decide if a binomial $x^{u}-x^{w}$ lies in the reduced Gr\"obner basis in the same time complexity.
\end{theorem}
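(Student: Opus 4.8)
The plan is to reduce both tasks to dynamic programming on the join tree of \Cref{thm:csp}. Put $D:=2g_\infty+1$, $r:=D$, and $c:=r^n\omega+(r^{n-1},\dots,r,1)$ as in \Cref{thm:c}; each coordinate of $c$ is $\geq 1$. The key step is a criterion for being a standard monomial. For $y\in\Z_+^n$ let $\mathcal L_{\leq y}:=\{g\in\mathcal L:g\leq y\text{ coordinatewise}\}$; since $y\geq 0$, this is exactly the set of $g\in\mathcal L$ whose positive part $x^{g^+}$ divides $x^y$. For every $g\in\mathcal L$ we have $\|g^+-g^-\|_\infty=\|g\|_\infty\leq g_\infty\leq r-1$, so by \Cref{thm:c} the initial monomial of $x^{g^+}-x^{g^-}$ is $x^{g^+}$ exactly when $c\cdot g>0$. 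Since $\{x^{g^+}-x^{g^-}\}_{g\in\mathcal L}$ is a Gröbner basis, $x^y$ is a standard monomial of $\ideal(A)$ if and only if no such initial monomial divides $x^y$, i.e.\ if and only if $\max\{c\cdot g:g\in\mathcal L_{\leq y}\}=0$ (the maximum is $\geq c\cdot 0=0$, and a positive value corresponds to a reducing binomial, with the converse obtained by replacing $g$ by $-g$). When $\|y\|_\infty\leq g_\infty$, the set $\mathcal L_{\leq y}$ is the solution set of the CSP~\eqref{eq:csp} after intersecting the domain of each $v_i$ with $\{-g_\infty,\dots,\min(g_\infty,y_i)\}$, which keeps the primal graph $G(A)$; so by \Cref{thm:csp} a backtrack-free join tree for it is built in time $O(m\,D^{\kappa+1})$, and one leaf-to-root pass storing, at each node and each admissible tuple, the best attainable value of the partial sum $c\cdot g$, computes $\max\{c\cdot g:g\in\mathcal L_{\leq y}\}$ --- hence decides standardness of $x^y$ --- in time $O(n\,D^{\kappa+1})$. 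The complementary root-to-leaf pass additionally yields, for all $i$ and all admissible $v$, the conditional optima $\max\{c\cdot g:g\in\mathcal L_{\leq y},\,g_i=v\}$, still within $O(n\,D^{\kappa+1})$.

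This gives the membership test. Up to exchanging $u$ and $w$, a binomial $x^u-x^w$ lies in the reduced Gröbner basis for $\prec_\omega$ if and only if: (i) $Au=Aw$, i.e.\ $x^u-x^w\in\ideal(A)$; (ii) $x^w$ is a standard monomial, in which case (i) forces $x^w$ to be the normal form of $x^u$, since each fiber of $A$ contains a unique standard monomial; and (iii) $x^u$ is a minimal generator of the initial ideal of $\ideal(A)$ under $\prec_\omega$, i.e.\ $x^u$ is not standard but $x^u/x_i$ is standard for every $i\in\operatorname{supp}(u)$. Since the reduced Gröbner basis lies among $\{x^{v^+}-x^{v^-}:v\in\G(A)\}$, we may reject unless $\operatorname{supp}(u)\cap\operatorname{supp}(w)=\varnothing$ and $\|u\|_\infty,\|w\|_\infty\leq g_\infty$; then all monomials in (ii)--(iii) have $\infty$-norm $\leq g_\infty$, so the standardness query applies to each. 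Concretely: (i) is a matrix--vector check; for (ii), run the query on $y=w$; for (iii), run the two-pass query on $y=u$ and read off, for all $i$ at once, $\max\{c\cdot g:g\in\mathcal L_{\leq u-e_i}\}=\max_{v\leq u_i-1}\max\{c\cdot g:g\in\mathcal L_{\leq u},\,g_i=v\}$ (here $\mathcal L_{\leq u-e_i}$ corresponds to the monomial $x^u/x_i$). The whole test costs $O(n\,D^{\kappa+1})$.

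For the normal form $(x^u)_{\prec_\omega}$ with $\|u\|_\infty\leq g_\infty$, I would compute it by iterated \emph{batched} reduction driven by the same data structure: given the current monomial $x^y$, the standardness query either certifies that $x^y$ is standard --- and then $x^y=(x^u)_{\prec_\omega}$ and we stop --- or outputs an explicit $g\in\mathcal L$ with $c\cdot g>0$ and $g^+\leq y$, whereupon we subtract $g$ with its maximal multiplicity $\mu:=\min\{\lfloor y_i/g^+_i\rfloor:i\in\operatorname{supp}(g^+)\}$ in one arithmetic step, replace $y$ by $y-\mu g$, and repeat. This terminates at the unique $\mathcal L$-irreducible element of the fiber of $u$, which is the normal form. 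To meet the claimed running time the batched process must be reorganized as a single leaf-to-root sweep of the join tree that outputs the exponent vector of the normal form directly, its (possibly large) exponents being produced by the arithmetic of the batching rather than by enumeration.

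This last point is the main obstacle. Unlike $\mathcal L$, the fiber $\{z\in\Z_+^n:Az=Au\}$ need not be finite, and the normal form can have exponents far larger than $g_\infty$: modulo the toric ideal of the $1\times2$ matrix $\left(\begin{smallmatrix}k&1\end{smallmatrix}\right)$ with the lexicographic order, $(x_1^k)_{\prec_\omega}=x_2^{k^2}$ while $g_\infty=k$. Hence one cannot simply run the dynamic program of \Cref{thm:csp} over fiber elements; the argument must use $\|u\|_\infty\leq g_\infty$ together with the elimination-tree structure of $G(A)$ to keep the per-node tables of size $O(D^{\kappa})$ while still recovering the true (large) exponents by back-substitution. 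Everything else --- the order-to-functional encoding, the standardness query, and the membership test --- is the routine leaf-to-root / root-to-leaf message passing of constraint processing.
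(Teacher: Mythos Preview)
Your membership test is correct and takes a different route from the paper. The paper works in the shifted set $\mathcal L_u:=\{v\in\mathcal L:u+v\ge0\}$ and declares $x^u-x^w$ to lie in $\G_{\prec_\omega}$ iff (a)~$w$ is the $\prec_\omega$-minimum of $\mathcal Z_w:=\{z\in\Z_+^n:Az=Aw,\ \|z\|_\infty\le g_\infty\}$, and (b)~$\{z\in\mathcal Z_u:z\preceq_\omega u\}=\{u,w\}$; it checks (b) by extracting the two $c$-smallest elements of $\mathcal L_u$ in one dynamic-programming pass. Your textbook characterization (``$x^w$ standard and $x^u$ a minimal generator of the initial ideal'') is instead implemented via $O(n)$ standardness queries batched into a single two-pass sweep. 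In fact the paper's criterion (b) is too strong: for $A=(1\ 1\ 1)$ under lex, $x_1-x_3$ lies in the reduced Gr\"obner basis, yet $\{z\in\mathcal Z_{e_1}:z\preceq_\omega e_1\}=\{e_1,e_2,e_3\}$ has three elements. So your route is not just different --- it sidesteps an error in the paper's argument.

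For the normal form the paper does \emph{not} iterate: it minimizes $c\cdot v$ once over $\mathcal L_u$ and returns $z:=u+v$. Your counterexample $A=(k\ \ 1)$, $u=(k,0)$ breaks this too, since $\mathcal L_u=\{(0,0),(-1,k)\}$ and the method outputs $(k{-}1,k)$ rather than the true normal form $(0,k^2)$. Thus the obstacle you flag is genuine for both approaches; neither your iterated-reduction sketch nor the paper's one-shot minimization meets the stated bound without an extra hypothesis such as ``the normal form itself has $\|\cdot\|_\infty\le g_\infty$'' (which \emph{is} automatic in the Gr\"obner-membership application, since reduced Gr\"obner basis elements of a toric ideal come from Graver elements). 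Your diagnosis of this gap is accurate; the paper simply glosses over it.
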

\begin{proof}
  We first give a few remarks that will be useful for both proofs.
  Given $u\in \Z^n_+$ with $\|u\|_\infty\!\!\leq\! g_\infty$, let 
  $ \mathcal{Z}_u := \{z\!\in\! \Z^n_+: Az\!=\! Au, \|z\|_\infty\!\!\leq\! g_\infty\} $.
  Note that $x^u \!-\! x^z \in \ideal(A)$ for any $z\!\in\! \mathcal{Z}_u$.
  Consider the total ordering on $\mathcal{Z}_u$ induced by~$\prec_\omega$.
  In particular, the normal form of $x^u$ is given by the $\prec_\omega$-smallest element in $\mathcal{Z}_u$.
  Consider also the set
  $\mathcal{L}_u := \{v \!\in\! \mathcal{L} : u {+} v \!\geq\! 0\}$.
  Note that $\mathcal{Z}_u \!\subset\! u+\mathcal{L}_u$.
  Indeed, if $z\!\in\! \mathcal{Z}_u$ then $v\!:=\! z {-} u$ satisfies $Av\!=\! 0$,
  and also $\|v\|_\infty \!\!\leq\! g_\infty$ since $z, u\!\geq\! 0$ and are bounded by~$g_\infty$,
  and thus $v\!\in\! \mathcal{L}_u$.
  Hence, we may use $\mathcal{L}_u$ instead of $\mathcal{Z}_u$.
  Let $c$ be as in \Cref{thm:c} with $r\!=\!2g_\infty{+}1$.
  By \Cref{thm:c} we have that $v\!\prec_\omega\! v'$ if and only if $c\!\cdot\! v < c\!\cdot\! v'$ for any $v,v'\in \mathcal{L}_u$.

  We proceed to the normal form problem.
  By the previous remarks, the normal form $x^z$ can be found by minimizing $c\cdot v$ among all $v \in \mathcal{L}_u$, and setting $z:=u{+}v$.
  We can obtain a data structure for the set $\mathcal{L}_u$ by post-processing the join tree representing~$\mathcal{L}$.
  For each node of the tree we need to select the values $v_i$ such that $u_i \!+\! v_i \!\geq\! 0$;
  all the other values are deleted.
  It remains to find the $v\in \mathcal{L}_u$ that minimizes $c\cdot v$.
  This can be done in linear time by using dynamic programming on the join tree, see e.g.,~\cite{Dechter1990}.

  Consider now the membership test.
  Observe that $x^{u}-x^{w}$ lies in the reduced Gr\"obner basis if and only if the following two conditions hold.
  First, $x^{w}$ must be a standard monomial, or equivalently, $w$ must be the $\prec_\omega$-minimum of $ \mathcal{Z}_{w} $.
  And second, the lower set $\{ z \!\in\! \mathcal{Z}_{u}: z\! \preceq_\omega \!u \}$ must consist of exactly two elements $\{u,w\}$.
  The first condition can be checked with the normal form algorithm.
  As for the second, it suffices to compute the two vectors $v_1,v_2$ that have the smallest values of $c\cdot v$ among all $v \in \mathcal{L}_u$.
  As before, we can compute $v_1,v_2$ in linear time with dynamic programming.
\end{proof}

The membership test from \Cref{thm:normalform} leads to a simple procedure for computing the reduced Gr\"obner basis, denoted $\G_{\prec}$:
\begin{enumerate}[label=(\roman*)]
  \itemsep0em 
  \item Iterate over all $v \in \mathcal{L}$.
  \item\label{item:comparison}
    Select the vectors such that $x^{v^+}-x^{v^-}\in \G_{\prec}$.
\end{enumerate}
The time complexity of this procedure is $O(n\,(2 g_\infty{+}1)^{\kappa+1}(\#\mathcal{L}))$, and the space complexity is $O(\#\G_{\prec} + n\,(2 g_\infty{+}1)^{\kappa+1})$.

We now proceed to Graver bases computation.
The next theorem gives an efficient way to decide membership to the Graver basis.
This membership test can be used to compute the Graver basis, by selecting the respective elements from~$\mathcal{L}$.

\begin{theorem} \label{thm:lineartime2}
  We can decide if a vector $z\!\in\!\Z^d$ lies in the Graver basis in time $O(n\,(2 g_\infty{+}1)^{\kappa+1})$.
\end{theorem}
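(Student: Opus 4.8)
The plan is to reduce Graver membership to a counting query on the very same constraint network that \Cref{thm:csp} builds for $\mathcal{L}$, after imposing unary filters. Recall that $z$ lies in $\G(A)$ if and only if $z$ is a $\sqsubseteq$-minimal element of $\{v:Av=0,\ v\neq 0\}$. Two cheap reductions come first. Since every Graver element has infinity norm at most $g_\infty$ (\Cref{thm:ginf}), if $\|z\|_\infty>g_\infty$ we immediately answer ``no''. Otherwise $z$ can only be a candidate if $z\in\mathcal{L}$ and $z\neq 0$, which we check with the membership query provided by \Cref{thm:csp} together with a single evaluation of $Az$. From now on assume $z\in\mathcal{L}\setminus\{0\}$.

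The key observation is that the relation $v\sqsubseteq z$ decomposes coordinatewise: $v\sqsubseteq z$ holds iff for every $i$ we have $v_iz_i\ge 0$ and $|v_i|\le|z_i|$, i.e. $v_i$ is confined to the interval $\{0,1,\dots,z_i\}$ when $z_i\ge 0$ and to $\{z_i,\dots,-1,0\}$ when $z_i<0$. Hence $\mathcal{S}_z:=\{v\in\mathcal{L}:v\sqsubseteq z\}$ is obtained from $\mathcal{L}$ by intersecting each variable's domain with such an interval, which is a unary constraint. Exactly as in the construction of $\mathcal{L}_u$ in the proof of \Cref{thm:normalform}, we post-process the join tree $T$ representing $\mathcal{L}$: at each node we delete every value of each variable $v_i$ that lies outside the prescribed interval. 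This takes linear time and yields a join tree for $\mathcal{S}_z$.

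It remains to detect whether $\mathcal{S}_z$ contains anything other than $0$ and $z$. Both $0$ and $z$ belong to $\mathcal{S}_z$ (as $0,z\in\mathcal{L}$ and $0,z\sqsubseteq z$), so $\#\mathcal{S}_z\ge 2$; and $z$ is $\sqsubseteq$-minimal in $\mathcal{L}\setminus\{0\}$ precisely when $\#\mathcal{S}_z=2$. Indeed, any $v\in\mathcal{S}_z\setminus\{0,z\}$ satisfies $0\neq v\sqsubset z$ and $Av=0$, so it witnesses non-minimality; conversely any such witness lies in $\mathcal{L}$ because its infinity norm is bounded by $\|z\|_\infty\le g_\infty$, hence lies in $\mathcal{S}_z$ and differs from $0$ and $z$. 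Computing $\min\{3,\#\mathcal{S}_z\}$ — which is all we need — is a standard bottom-up dynamic program over the width-$\kappa$ join tree, running in time $O(n\,(2g_\infty{+}1)^{\kappa+1})$; see e.g.~\cite{Dechter2003}. We output ``yes'' iff this quantity equals $2$, which gives the claimed running time.

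The two preliminary reductions and the coordinatewise description of $\sqsubseteq$ are routine. The one point that needs a little care — and the main technical obstacle — is that after deleting values the tree $T$ must still faithfully represent $\mathcal{S}_z$ for the counting dynamic program. This holds because $T$ is a tree decomposition of the primal graph $G(A)$ and the newly imposed constraints are unary, hence automatically covered by every bag; the bottom-up count over $T$ therefore remains exact, and no re-establishment of backtrack-freeness (needed only for enumeration) is required for mere counting.
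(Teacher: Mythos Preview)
Your proof is correct and follows essentially the same route as the paper: define $\mathcal{S}_z=\{v\in\mathcal{L}:v\sqsubseteq z\}$, obtain it from the join tree of~$\mathcal{L}$ by imposing the coordinatewise (unary) filters, and test whether $\mathcal{S}_z$ contains only the trivial elements. If anything, you are more careful than the paper's own write-up, since you explicitly account for the fact that $0\in\mathcal{S}_z$ and hence the correct test is $\#\mathcal{S}_z=2$ rather than $\mathcal{S}_z=\{z\}$.
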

\begin{proof}
  Recall the partial order $\sqsubseteq$ on~$\Z^n$ defining the Graver basis.
  Consider the set $S_z:= \{v \!\in\! \mathcal{L} : v\!\sqsubseteq\! z\}$.
  We can obtain a data structure for~$S_z$ by post-processing the join tree representing~$\mathcal{L}$.
  For each node of the tree we need to filter the values $v_i$ such that $v_i \!\leq \!z_i$ and $v_i z_i\!\geq\! 0$;
  all the other values are removed.
  Deciding if $z$ lies in the Graver basis amounts to checking if $S_z = \{z\}$.
\end{proof}

To conclude, we point out that the methods from above can be adapted into a degree by degree strategy.
Given a degree bound~$d$, consider computing the \emph{truncation} of the reduced Gr\"obner basis (or Graver basis) to degree~$d$. 
To do so, consider the set
\begin{align}\label{eq:csp2}
  \mathcal{L}_d\ :=\ \{
  v \in \Z^n \;:\;
  A v = 0 \;\text{ and }\; \|v^+\|_1 \leq d \;\text{ and }\; \|v^-\|_1 \leq d
  \}.
\end{align}
Note that $\mathcal{L}_d$ contains the truncated Graver basis.
The next theorem shows how to efficiently construct a data structure for~$\mathcal{L}_d$.
As before, this data structure can be used to compute normal forms of low degree polynomials, and to decide membership to the Gr\"obner (or Graver) basis.
The membership test can be then used to obtain the truncated Gr\"obner (or Graver) basis.

\begin{theorem}\label{thm:csp2}
  We can construct a data structure for the set~$\mathcal{L}_d$
  in time $O(m\, (2 d{+}1)^{\kappa+3})$ and with space complexity $O(n\, (2 d{+}1)^{\kappa+3})$.
  This data structure allows to solve following problems in time $O(n\, (2 d{+}1)^{\kappa+3})$:
  \begin{itemize}
  \itemsep0em 
    \item For a monomial $x^u$ of degree $\leq\! d$, compute its normal form with respect to $\prec_{\bf 1}$.
    \item Decide if a binomial $x^{v^+}-x^{v^-}$ of degree $\leq\! d$
      lies in the $\prec_{\bf 1}$-reduced Gr\"obner basis.
    \item Decide if a vector $v\in\Z^n$, with $\|v^+\|_1,\|v^-\|_1\leq d$, lies in the Graver basis.
  \end{itemize}
\end{theorem}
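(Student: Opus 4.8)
The plan is to follow the proofs of \Cref{thm:csp}, \Cref{thm:normalform} and \Cref{thm:lineartime2}, the only genuinely new point being how to incorporate the two \emph{global} degree bounds $\|v^+\|_1\le d$ and $\|v^-\|_1\le d$ into a bounded-treewidth computation. First I would note that every $v\in\mathcal{L}_d$ has $\|v\|_\infty\le d$, so~\eqref{eq:csp2} can be modelled as a CSP in which each entry $v_j$ ranges over the domain $\{-d,\dots,d\}$, of size $D:=2d{+}1$, and the $m$ equations $\sum_j A_{ij}v_j=0$ contribute constraints whose primal graph is $G(A)$, exactly as in \Cref{thm:csp}. The two degree bounds are not local — each involves all $n$ coordinates — so rather than adding them as constraints I would carry them as extra state in the dynamic program. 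Recall that the data structure of \Cref{thm:csp} is a join tree of $G(A)$, each node of which is indexed by a partial assignment and resolved by bucket elimination; I would augment each node so that it additionally records the two partial budgets $\sum_i v_i^+$ and $\sum_i v_i^-$ accumulated over the variables already eliminated, discarding any partial state whose budget exceeds~$d$. (Equivalently, one localizes each budget by threading a path of partial-sum variables $p_0,\dots,p_n$ with $p_0{=}0$, $p_j{=}p_{j-1}{+}\max(v_j,0)$, and likewise $q_0,\dots,q_n$ for the negative parts, which is the standard device for a linear budget in a constraint network; see~\cite{Dechter2003,KLO}.) Each partial budget takes $\le D$ values, so the relevant node tables grow by a factor of at most $D^2$; running the construction of \Cref{thm:csp} with this augmentation therefore gives a backtrack-free data structure for $\mathcal{L}_d$ in time $O(m\,D^{\kappa+3})$ and space $O(n\,D^{\kappa+3})$, proving the first claim.

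The three algorithmic tasks then transcribe the earlier proofs with $\mathcal{L}$ replaced by $\mathcal{L}_d$. For the normal form of a monomial $x^u$ of degree $\le d$: since $\prec_{\bf 1}$ is graded, its normal form $x^z$ satisfies $\deg z\le\deg u\le d$, so $v:=z{-}u$ has $Av=0$, $\|v^+\|_1\le\|z\|_1\le d$ and $\|v^-\|_1\le\|u\|_1\le d$; hence $v\in\mathcal{L}_d$ with $u{+}v\ge 0$, and conversely every such $v$ yields a feasible $z$. Exactly as in the proof of \Cref{thm:normalform}, I would post-process the join tree to the subnetwork $\{v\in\mathcal{L}_d:u{+}v\ge 0\}$ and minimize $c\cdot v$ by dynamic programming, with $c$ the linearization of \Cref{thm:c} taken for $r:=2d{+}1$, so that the relevant $\prec_{\bf 1}$-comparisons become comparisons of $c\cdot v$. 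The reduced Gr\"obner basis test for a binomial $x^{v^+}{-}x^{v^-}$ of degree $\le d$ is, as there, the conjunction that $x^{v^-}$ is standard (the normal form test) and that the $\prec_{\bf 1}$-lower set of $\{w\in\Z^n_+:Aw=Av^+\}$ below $v^+$ equals $\{v^+,v^-\}$; the latter is decided by extracting the two $c$-smallest elements of $\{v\in\mathcal{L}_d:v^+{+}v\ge 0\}$, using that any such $w$ with $w\preceq_{\bf 1}v^+$ has degree $\le d$ and is therefore captured by $\mathcal{L}_d$. Finally, the Graver test for $v$ with $\|v^+\|_1,\|v^-\|_1\le d$ is, as in \Cref{thm:lineartime2}, the check $S_v=\{v\}$, where $S_v:=\{w\in\mathcal{L}_d:w\sqsubseteq v\}$ is obtained by filtering each join-tree node to the values $w_i$ with $w_iv_i\ge 0$ and $|w_i|\le|v_i|$; this is valid because $w\sqsubseteq v$ forces $\|w^+\|_1\le\|v^+\|_1\le d$ and $\|w^-\|_1\le\|v^-\|_1\le d$, whence $w\in\mathcal{L}_d$. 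Each of the three procedures touches the data structure a bounded number of times, so each runs in $O(n\,D^{\kappa+3})$.

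I expect the only real obstacle to be the construction of the data structure in the first paragraph: localizing the two $\ell_1$-budget constraints without blowing up the treewidth by more than an additive constant, and checking that each partial budget needs only the range $\{0,\dots,d\}$ of size $\le D$, so that the exponent lands on exactly $\kappa{+}3$ rather than something larger. Everything after that is a routine adaptation of \Cref{thm:csp}, \Cref{thm:normalform} and \Cref{thm:lineartime2}, with the $\ell_1$ degree bound on both $v^+$ and $v^-$ playing the role that $\|v\|_\infty\le g_\infty$ plays there.
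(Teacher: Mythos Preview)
Your proposal is correct and matches the paper's approach essentially line for line: the paper likewise introduces auxiliary partial-sum variables $y_i,z_i$ tracking $\sum_{j\le i}v_j^+$ and $\sum_{j\le i}v_j^-$ (capped at $d{+}1$) along the elimination order, observes that the resulting CSP on $3n$ variables has clique number $\kappa{+}3$, and then declares the three algorithmic claims analogous to \Cref{thm:normalform,thm:lineartime2}. Your write-up is in fact slightly more careful than the paper's in justifying why the graded order $\prec_{\bf 1}$ ensures that the relevant fibers are captured by~$\mathcal{L}_d$.
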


\begin{proof}
Constructing the data structure is similar to \Cref{thm:csp}.
We view~\eqref{eq:csp2} as a CSP where each entry $v_i$ lies in $\{-d,-d{+}1,\dots,d\}$.
Each row of~$A$ gives a constraint $\sum_j A_{ij} v_j \!=\!0$.
We also have the constraints $\sum_i v^+_i \!\leq d$ and $\sum_i v^-_i \!\leq d$.
These last two constraints involve all variables, making the primal graph into the complete graph.
We can avoid this issue by introducing auxiliary variables.
Assume that the ordering $v_1,v_2,\dots,v_n$ leads to chordal completion of clique number~$\kappa{+}1$.
Let $y_i, z_i \in \{1,\dots,d{+}1\}$ defined recursively as follows
\begin{align*}
  y_1=v_1^+,\quad
  z_1=v_1^-,\quad
  y_{i+1}=\max\{y_{i} + v_i^+,\;d{+}1\}, \quad
  z_{i+1}=\max\{z_{i} + v_i^-,\;d{+}1\}.
\end{align*}
Note that the constraints $\sum_i v^+_i{\leq} d$, $\sum_i v^-_i{\leq} d$ are equivalent to $y_n\leq d$, $z_n\leq d$.
The new primal graph has $3n$ variables,
and the ordering $v_1,y_1,z_1,v_2,y_2,z_2,\dots$ has clique number $\kappa{+}3$.
Hence, the CSP can be solved using $O(m D^{\kappa+3})$ time and $O(n D^{\kappa+3})$ space, where $D\!:=\!2 d{+}1$.
Finally, the normal form algorithm and the membership tests are analogous to those in \Cref{thm:normalform,thm:lineartime2}.
\end{proof}
\begin{remark}
  The reason why the above theorem focuses on the graded lexicographic order~$\prec_{\bf1}$ is that it is compatible with the degree truncation in~\eqref{eq:csp2}.
  Similar results can be shown for more general graded orders.
\end{remark}

\section*{Acknowledgments}

Shmuel Onn was supported by the Israel Science Foundation and by the Dresner Chair. 
Both authors thank Bernd Sturmfels and the Max-Planck Institute in Leipzig for their hospitality.

\begin{small}

\end{small}

\vfill
\smallskip
\bigskip
\noindent
\footnotesize {\bf Authors' addresses:}

\smallskip

\noindent Diego Cifuentes,
Massachusetts Institute of Technology
\hfill {\tt diegcif@mit.edu}

\noindent Shmuel Onn,
Technion - Israel Institute of Technology
\hfill  {\tt onn@ie.technion.ac.il}

\end{document}